\newtheorem{theorem}{Theorem}
\newtheorem{observation}[theorem]{Observation}
\newtheorem{definition}[theorem]{Definition}
\newtheorem{conjecture}[theorem]{Conjecture}
\newtheorem{lemma}[theorem]{Lemma}
\begin{document}

\title{Graph-theoretic insights on the constructability of complex entangled states}

\author{L. Sunil Chandran}
\affiliation{Indian Institute of Science, Bengaluru.}
\orcid{0000-0001-5451-6975}
\thanks{ Supported by SERB Core Research Grant CRG/2022/006770: ``Bridging Quantum Physics with Theoretical Computer Science and Graph Theory''}
\author{Rishikesh Gajjala}
\email{rishikeshg@iisc.ac.in}
\orcid{0000-0002-8726-3465}
\affiliation{Indian Institute of Science, Bengaluru.}
\maketitle

\begin{abstract}
The most efficient automated way to construct a large class of quantum photonic experiments is via abstract representation of graphs with certain properties. While new directions were explored using Artificial intelligence and SAT solvers to find such graphs, it becomes computationally infeasible to do so as the size of the graph increases. So, we take an analytical approach and introduce the technique of local sparsification on experiment graphs, using which we answer a crucial open question in experimental quantum optics, namely whether certain complex entangled quantum states can be constructed. This provides us with more insights into quantum resource theory, the limitation of specific quantum photonic systems and initiates the use of graph-theoretic techniques for designing quantum physics experiments.
\end{abstract}

\section{Introduction}
Recent years have seen dramatic advances in quantum optical technology \cite{intro1, intro2} as photons are at the core of many quantum technologies and in the experimental investigation of fundamental questions about our universe’s local and realistic nature. Due to the peculiar behaviour of multi-particle interference, designing experimental setups to generate multi-partite entanglement in photonics is challenging. The most efficient automated way to construct a large class of such quantum photonic experiments is via abstract representation of graphs with certain properties \cite{Quantum_graphs_3, Quantum_graphs, Quantum_graphs_2} allowing new possibilities for quantum technology \cite{intro3, Feng:23, Bao23}. The construction of such graphs has been a challenging mathematical open problem as one needs to carefully tune the edge weights to satisfy exponentially many equations \cite{krenn2019questions}.

Recently, new directions were explored using Artificial intelligence and SAT solvers to find such graphs, which could be used to design quantum photonic experiments \cite{AIquantum1, AIquantum2, CerveraLierta2022designofquantum}. However, using these methods, it is computationally infeasible to find solutions in large graphs as the search space grows exponentially. Therefore, more advanced analytical methods are necessary. In this work, we introduce the technique of local sparsification on the graphs corresponding to experiments, using which we answer a crucial open question in experimental quantum optics, namely whether certain complex entangled quantum states can be constructed. The two main ideas behind our technique are:\\
1) To develop an edge pruning algorithm which helps to construct quantum optical experiments with as few resources as possible. \\
2) Detect a special sparse subgraph in the pruned graph corresponding to the quantum optical experiment whose edge count bounds the dimension of some multi-particle entangled quantum states.

Our ideas are general and might be useful to understand the experimental designs to construct several other quantum states like NOON states, cluster states, W states and Dickes states. With more structural insights into the graphs used for creating high-dimensional multi-particle entanglement, we believe our techniques can be used to resolve a conjecture on the constructability of certain complex entangled quantum states by Cervera-Lierta, Krenn, and Aspuru-Guzik \cite{CerveraLierta2022designofquantum}. This would give us more insights into quantum resource theory and the limitation of specific quantum photonic systems.

\section{Graph representation of quantum optics experiments}
\label{sec:intro}

Quantum entanglement theory implies that two particles can influence each other, even though they are separated over large distances. In 1964, Bell demonstrated that quantum mechanics conflicts with our classical understanding of the world \cite{bell}. Later, in 1989, Greenberger, Horne, and Zeilinger (abbreviated as GHZ) studied what can happen if more than two particles are entangled \cite{Greenberger}. Such states in which three particles are entangled (
$|GHZ_{3,2}\rangle = \frac{1}{\sqrt{2}}\left(|000\rangle + |111\rangle \right)$) were observed rejecting local-realistic theories  ~\cite{PhysRevLett.82.1345,Pan2000}. 

While the study of such states started purely out of fundamental curiosity \cite{fund_cur1,fund_cur2,fund_cur3}, they are now used in many applications in quantum information theory, such as quantum computing \cite{Gu2020}. They are also essential for early tests of quantum computing tasks \cite{quant_comp_tasks}, and quantum cryptography in quantum networks\cite{quant_networks}. Increasing the number of particles involved and the dimension of the GHZ state is essential both for foundational studies and practical applications. Motivated by this, a huge effort is being made by several experimental groups around the world to push the size of GHZ states. Photonic technology is one of the key technologies used to achieve this goal \cite{quant_comp_tasks,10photon}.

In 2017, Krenn, Gu and Zeilinger \cite{Quantum_graphs} discovered (and later extended \cite{Quantum_graphs_2, Quantum_graphs_3}) a bridge between experimental quantum optics and graph theory. They observed that large classes of quantum optics experiments (including those containing probabilistic photon pair sources, deterministic photon sources and linear optics elements) can be represented as an edge-coloured, edge-weighted graph. Additionally, every edge-coloured edge-weighted graph can be translated into a concrete experimental setup. This technique has led to the discovery of new quantum interference effects and connections to quantum computing \cite{Quantum_graphs_3}. Furthermore, it has been used as the representation of efficient AI-based design methods for new quantum experiments \cite{AIquantum1, AIquantum2}. The states formed by this framework were also experimentally demonstrated \cite{Feng:23, qian2023multiphoton}. This graph-based representation was also used to demonstrate many more systems beyond post-selected states (like NOON states and heralded states) \cite{AIquantum2}. This representation and another closely related graph-based representation have also been used for quantum circuit representation and computation \cite{AIquantum1, anand2022information}. The largest integrated photonic chip experiment (with several applications) presented so far \cite{Bao23} also follows a graph-based representation!


However, despite several efforts, a way to generate a GHZ state of dimension $d>2$ with more than $n=4$ photons with perfect quality and finite count rates without additional resources \cite{krenn2019questions} could not be found. This led Krenn and Gu to conjecture that it is not possible to achieve this physically. They have also formulated this question purely in graph theoretic terms and publicised it widely among graph theorists for a resolution \cite{mixon_website}. We now formally state this problem in graph-theoretic terms and explain its equivalence in quantum photonic terms. For a detailed description of why both problems are equivalent, we refer the reader to \cite{krenn2019questions}.

\subsection{Mathematical formulation of the experiments}

\begin{figure}[t!]
    \centering   
    \begin{minipage}{0.85\textwidth}
\centering    
{\includegraphics[width=63mm]{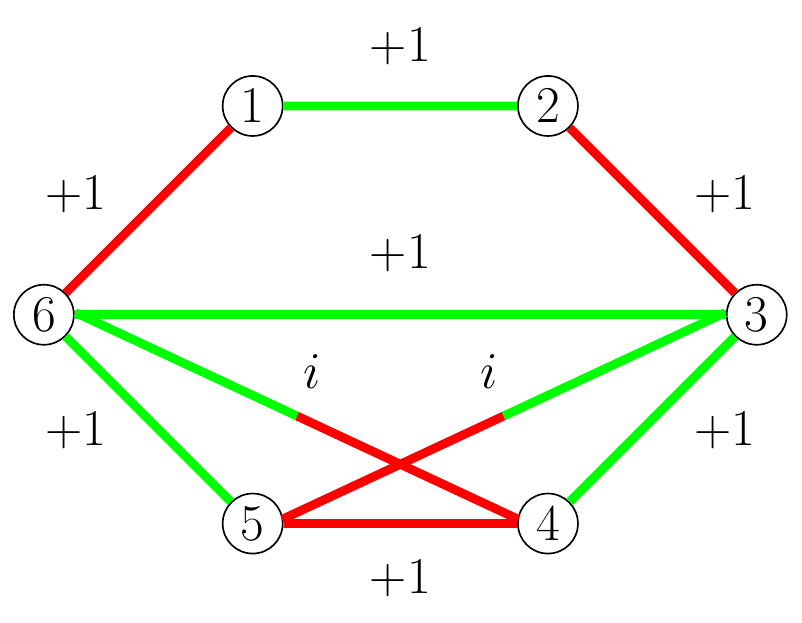}}
\caption{An edge-coloured edge-weighted graph. The edges $\{1,2\},\{3,4\},\{5,6\}, \{3,6\}$ are of colour green (mode number $1$) and $\{1,6\},\{2,3\},\{4,5\}$ are of colour red (mode number $0$). The edges $\{4,6\}$ and $\{3,5\}$ are bi-chromatic, where the halves starting at the vertices $4,5$ are of the colour red, and the remaining halves are of the colour green.}
\label{fig:main_example}
    \end{minipage}
   \hspace*{1cm}
\end{figure}

\begin{figure}[t!]
    \centering
\centering    
\begin{subfigure}[b]{0.2\textwidth}
         \centering
         \fbox{\includegraphics[width=0.9\textwidth]{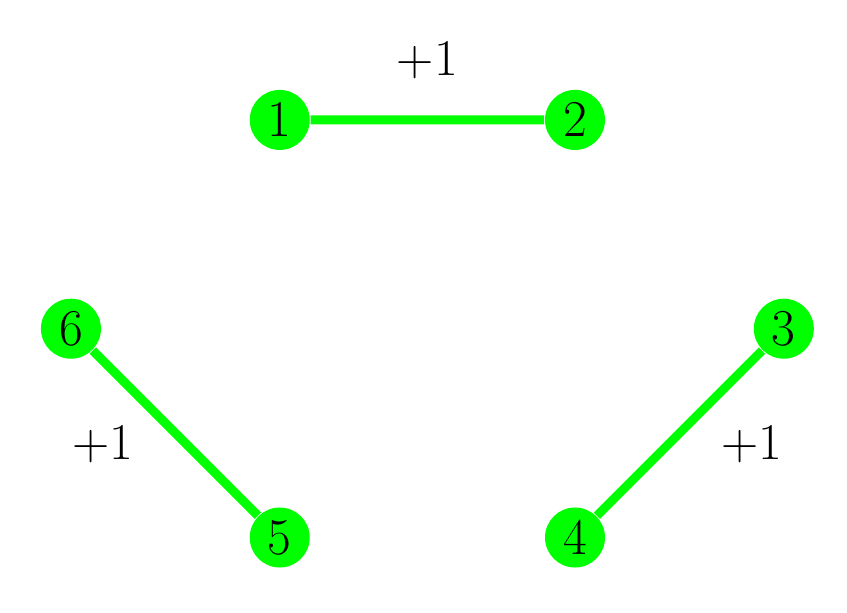}}
         \caption{The  vertex colouring is  $|111111\rangle$.}
         \label{fig:pm1}
\end{subfigure}
\hfill
\begin{subfigure}[b]{0.2\textwidth}
         \centering
         \fbox{\includegraphics[width=0.9\textwidth]{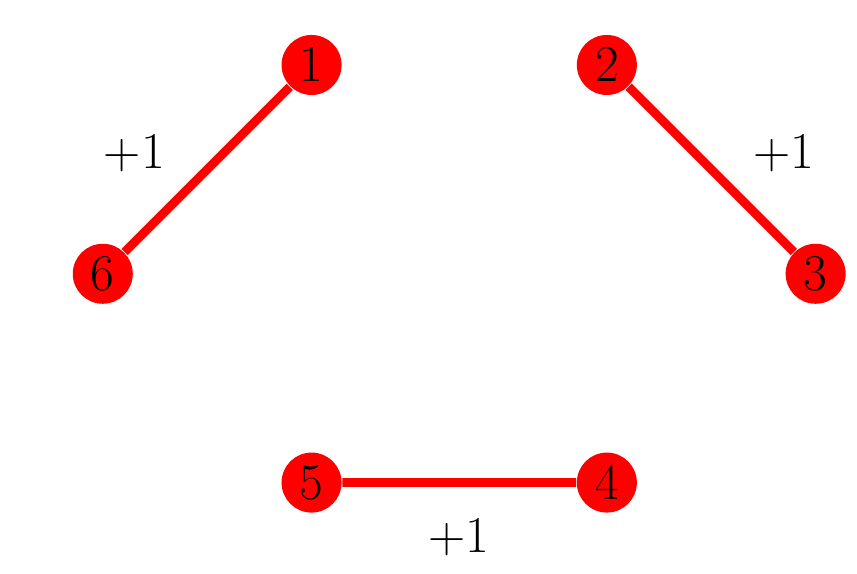}}
         \caption{The  vertex colouring is $|000000\rangle$.}
         \label{fig:pm2}
\end{subfigure}
\hfill
\begin{subfigure}[b]{0.2\textwidth}
         \centering
         \fbox{\includegraphics[width=0.9\textwidth]{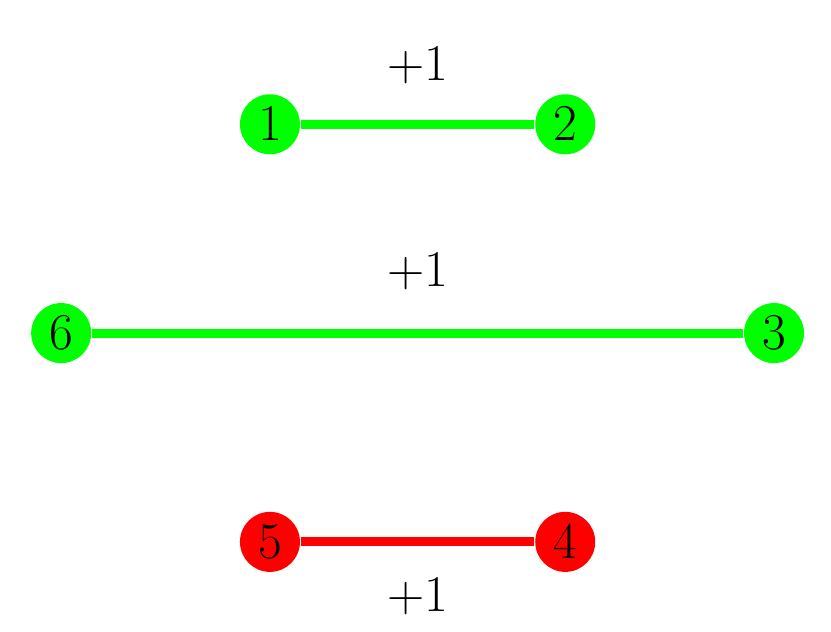}}
         \caption{The  vertex colouring is  $|111001\rangle$.}
         \label{fig:pm3}
\end{subfigure}
\hfill
\begin{subfigure}[b]{0.2\textwidth}
         \centering
         \fbox{\includegraphics[width=0.9\textwidth]{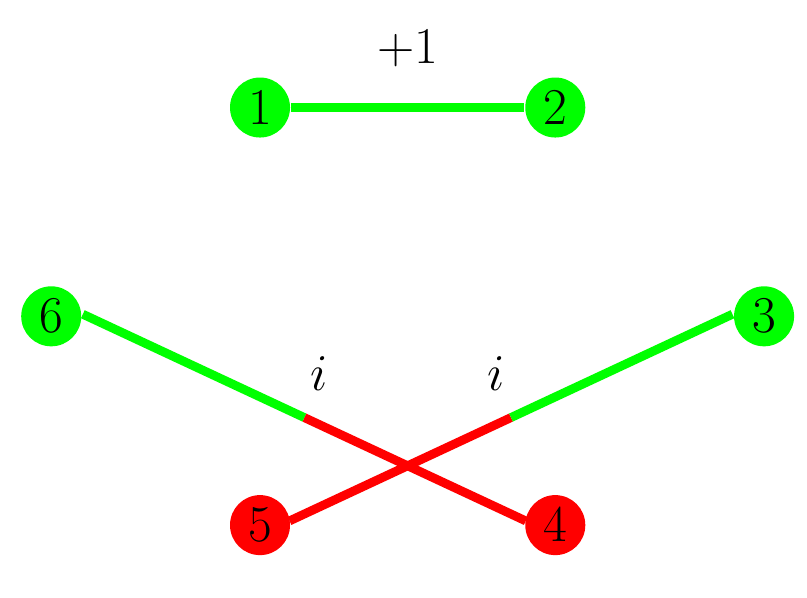}}
         \caption{The  vertex colouring is $|111001\rangle$.}
         \label{fig:pm4}
\end{subfigure}

\label{fig:perfect_matchings}
\end{figure}
We first define some commonly used graph-theoretic terms. For a graph $G$, let $V(G), E(G)$ denote the set of vertices and edges, respectively. For $S\subseteq V(G)$, $G[S]$ denotes the induced subgraph of $G$ on $S$. $\mathbb{N},\mathbb{C}$ denote the set of natural and complex numbers, respectively. The cardinality of a set $\cal{S}$ is denoted by $|\cal{S}|$. For a positive integer $r$, $[r]$ denotes the set $\{1,2\ldots,r\}$. $K_n$ denotes the complete graph on $n$ vertices. We call a subset $M$ of edges in a graph a matching if each vertex in the graph has at most one edge in $M$ incident on it. We call a matching $M$ to be a perfect matching if each vertex in the graph has exactly one edge in $M$ incident on it. For example, all four perfect matchings of the graph in 
\Cref{fig:main_example} are listed in 
\Cref{fig:perfect_matchings}.

We assume all graphs to be simple throughout unless otherwise mentioned. Usually, in an edge colouring, each edge is associated with a natural number. However, in such edge colourings, the edges are assumed to be monochromatic. But in the graphs corresponding to experiments, we are allowed to have bichromatic edges, i.e. one half coloured by a certain colour and the other half coloured by a different colour. For example, in the graph shown in \Cref{fig:main_example}, the edge between vertices $4$ and $6$ is a bichromatic edge. We now develop some new notation to describe bichromatic edges. 

Each edge of the graph can be thought to be formed by two half-edges, i.e., the edge $e=\{u,v\}$ consists of  the half-edge starting from the vertex $u$
to the middle of the edge $e$ (hereafter referred to as the $u$-half-edge of $e$) and the half-edge starting from the vertex $v$ to the
middle of the edge $e$ (hereafter referred to as the $v$-half-edge $e$). Thus, the edge set $E$ of the graph gives rise to the set of half-edges $H$,
with $|H| = 2 |E|$.  If $e =\{u,v\}$, we may denote the $v$-half-edge of $e$ by $e_v$ and $u$-half edge of $e$ by $e_u$. Consider the edge $e$ between vertices $4$ and $6$ in \Cref{fig:main_example}. The $4$-half edge of $e$ ($e_4$) is of colour $0$ (red), and the $6$-half edge ($e_6$) is of colour $1$ (green).

The type of edge colouring that we consider in this paper is more aptly called a {\it half-edge colouring}. It is a function from  $H$ to  $\mathbf {N}$, say $c: H \rightarrow \mathbf {N}$.  (Note that we use 
positive integers to name the colours.) In other words, each half-edge gets a colour.   An edge is called monochromatic if both
its half-edges get the same colour (in which case we may use
$c(e)$ to denote this colour); otherwise, it is called a bi-chromatic edge. The colour-degree of a vertex $v$ with respect to a colour $i$, $deg(v, i)$ is the
number of half-edges with colour $i$ that is incident on $v$. For example, consider the edge $e$ between vertices $4$ and $6$ in \Cref{fig:main_example}. It is easy to see that $c(e_4)=0$ and $c(e_6)=1$. Consider the edge $e'$ between vertices $1$ and $6$. As $c(e'_1)=c(e'_6)=0$, $e'$ is monochromatic and moreover, $c(e')=0$. The colour degree of vertex $3$ with respect to colour $1$, $deg(3,1)=3$.




A weight assignment $w$ assigns every edge $e$ a weight $w(e) \in \mathbb{C} \setminus \{0\}$. Note that, as a zero-weight edge is the same as the edge being absent, we don't need to consider zero-weight edges.

\begin{definition}
A half-edge coloured, edge-weighted graph is said to be an experiment graph.
\end{definition}



\begin{definition}
The weight of a perfect matching $P$, $w(P)$ is the product of the weights of all its edges, i.e., $\prod\limits_{e\in P}w(e)$
\end{definition}

A vertex colouring $vc$ associates a colour $i$ to each vertex in the graph for some $i\in \mathbb{N}$.
We use $vc(v)$ to denote the colour of vertex $v$ in the vertex colouring $vc$. Consider a perfect matching $P$. We say that each perfect matching $P$ \textit{induces} a vertex colouring $vc$ if for each vertex $v$, $vc(v)$ is equal to $c(e_v)$ (i.e., the colour of $v$-half of $e$) where $e$ is the unique edge in $P$ incident on $v$. Note that this way, we give exactly one colour to each vertex. So, it is indeed a vertex colouring. We say that this vertex colouring $vc$ is induced by $P$. Note that different perfect matchings can induce the same vertex colouring. For instance, \Cref{fig:pm3} and \Cref{fig:pm4} induce the same vertex colouring ($|111001\rangle$) in which vertices $1,2,3,6$ are coloured green and vertices $4,5$ are coloured red. A vertex colouring $vc$ is defined to be feasible if there exists at least one perfect matching $P$, which induces $vc$.

\begin{definition}
The weight of a vertex colouring $vc$, $w(vc)$ is the sum of the weights of all perfect matchings $P$ inducing $vc$. 
\end{definition}
The weight of a vertex colouring, which is not feasible, is zero by default.

\begin{definition}\label{ghz_def}
An experiment graph is said to be a \textit{GHZ} experiment graph if:
\begin{enumerate}
    \item All feasible monochromatic vertex colourings have a weight of $1$.
    \item All non-monochromatic vertex colourings have a weight of $0$.
\end{enumerate}
\end{definition}

For example, consider the experiment graph shown in \Cref{fig:main_example}.
\begin{enumerate}
    \item Let $vc_1$ correspond to the vertex colouring in which all vertices are coloured $1$, as shown in \Cref{fig:pm1} and let $vc_2$ correspond to the vertex colouring in which all vertices are coloured $0$, as shown in \Cref{fig:pm2}. $$w(vc_1)=1 \And w(vc_2)=1$$
    \item Let $vc_3$ correspond to the vertex colouring in which vertices $\{1,2,3,6\}$ are coloured $1$ and vertices $\{4,5\}$ are coloured $1$, as shown in \Cref{fig:pm3} and \Cref{fig:pm4}. $$w(vc_3)=1-1=0$$
\end{enumerate}
Observe that all feasible monochromatic vertex colourings, i.e., $vc_1$ and $vc_2$, have weight $1$. All non-monochromatic vertex colourings, i.e., $vc_3$, $vc_4$ and several other non-monochromatic vertex colourings which are not feasible, have a weight of $0$. Thus, the experiment graph shown in \Cref{fig:main_example} is a GHZ experiment graph.

\begin{definition}
The dimension of a GHZ experiment graph $G$, $\mu(G)$ is the number of feasible monochromatic vertex colourings having a weight of $1$.
\end{definition}

\subsection{Correspondence to quantum photonic experiments}

In \Cref{tab:graph-optical-mapping}, we list what different graph theoretic terms correspond to in the quantum photonic experiments. At a high level, every experiment graph (half-edge-coloured edge-weighted graph) corresponds to a quantum photonic experiment. For this quantum photonic experiment to give rise to a GHZ state, the experiment designer should carefully tune the edge weights and edge colours. If the edge weights (amplitude of photon pairs) and edge colours (mode numbers of the corresponding photon pairs) are tuned in such a way that it leads to a GHZ experiment graph $G$ (i.e., satisfies the conditions in \Cref{ghz_def}), then the corresponding quantum photonic experiment would lead to $n$-particle $d$-dimensional GHZ state ($|GHZ_{n,d}\rangle$), where $n=V(G)$ and $d=\mu(G)$. Note that by tuning weights differently so that they satisfy a different set of conditions (analogous to \Cref{ghz_def}), quantum states like NOON states, cluster states, W states and Dickes states can also be generated.

\begin{table}[htbp]
    \centering
    \caption{Mapping between Graphs and Quantum Optical Experiments}
    \label{tab:graph-optical-mapping}
    \begin{tabular}{|p{5cm}|p{8cm}|}    
        \hline
        \textbf{Graph theory} & \textbf{Quantum photonics} \\
        \hline
        Vertex $v$ & A single-photon detector $det(v)$ in the output of some photon path \\
        \hline
        Edge, $e=\{u,v\}$ & A correlated photon pair $\{p(e_u),p(e_v)\}$ 
        
        The photons $p(e_u)$ and $p(e_v)$ are on photon paths leading to the photon detectors $det(u)$ and $det(v)$, respectively.
        \\
        \hline
        Edge Weight of $e=\{u,v\}$ & Amplitude of the photon pair $\{p(e_u),p(e_v)\}$ \\
        \hline
        Bichromatic edge $e=\{u,v\}$ &
        Mode numbers of the photons $p(e_u)$ and $p(e_v)$ are different \\
        \hline Monochromatic edge $e=\{u,v\}$ &  Mode numbers of the photons $p(e_u)$ and $p(e_v)$ are same\\
        \hline Perfect matching $P=\{\leftindex_1{e},\leftindex_2 {e}\cdots\}$, where the edge $\leftindex_i {e}=\{u_i,v_i\}$  & A multi-photon event in which the photons $ p(\leftindex_{1} {e}_{u_1}),p(\leftindex_{1} {e}_{v_1}),p(\leftindex_{2} {e}_{u_2}),p(\leftindex_{2} {e}_{v_2})\cdots$ are detected.\\
        \hline Weight of a perfect matching $P$ & Product of amplitudes of photon pairs $ \{ p(\leftindex_{1} {e}_{u_1}),p(\leftindex_{1} {e}_{v_1})\},\{p(\leftindex_{2} {e}_{u_2}),p(\leftindex_{2} {e}_{v_2})\}\cdots$  \\
        \hline Vertex colouring $vc$ & A multi-photonic term.
        
        This is a mapping of each photon detector to the mode number of the photon detected by it.  \\
        \hline Vertex colouring induced by a perfect matching $P$ & Every multi-photon event causes a multi-photonic term. Such a term is produced by mapping each photon detector $det(v)$ to the mode number of the photon $p(e_v)$, where $e$ is the unique edge in $P$ incident on $v$.
        
        There could be several multi-photon events which cause the same multi-photonic term.\\
        
        \hline Weight of a vertex colouring $vc$ &  Amplitude of the multi-photonic term corresponding to $vc$. This is the sum of weights of all events causing the multi-photonic terms inducing to $vc$. \\
        \hline Monochromatic vertex colouring & Multiphotonic terms in which all photon detectors are mapped to the same mode number.  \\
        \hline GHZ experiment graph of dimension $d$ 
        
        & $d$-dimensional GHZ state (maximally entangled state)
        
        There are $d$ Multiphotonic terms of amplitude $1$ in which all photon detectors are mapped to the same mode number. All other multiphotonic terms have weight zero.\\
        \hline
    \end{tabular}
\end{table}

{The experiment corresponding to the experiment graph shown in \Cref{fig:main_example} produces the terms (vertex colourings) $|000000\rangle$ and $|111111\rangle$ with amplitude (weight) $1$. All the remaining terms (vertex colourings) have amplitude (weight) $0$. Therefore, the quantum state observed in the experiment (after normalizing the terms) is $|GHZ_{6,2}\rangle = \frac{1}{\sqrt{2}}\left(|000000\rangle + |111111\rangle \right)$ (a $6$ vertex GHZ experiment graph of dimension $2$). For a detailed description of the quantum physical meaning of this setup, refer to \cite{krenn2019questions}.}

\subsection{Progress on the problem}

\begin{table}[htbp]
    \centering
    \caption{Current state-of-the-art results and conjectures}
    \label{tab:comparison}
    \begin{tabular}{|c|c|c|c|c|}
        \hline
       \multirow{2}{*}{\textbf{Reference}} & \textbf{Multi edges} & \multirow{2}{*}{\parbox{3cm}{\textbf{Bichromatic} \\ \textbf{edges}}}  & \textbf{Dimension} & \textbf{Status}\\ & & & & \\
        \hline
       \cite{krenn_website} & Allowed & Allowed & $n>4$, $d\leq2$ & { \Cref{krenn_gu_conj}}\\
        \hline
        \multirow{2}{*}{
        \cite{CerveraLierta2022designofquantum}} & \multirow{2}{*} {Allowed} & \multirow{2}{*} {Not allowed} & \multirow{2}{*}{\parbox{3cm}{$n=6$, $d < 3$ \\ $n=8$, $d < 4$}} & \multirow{2}{*}{\parbox{3cm}{Proved with \\ SAT solvers}} \\
        & & & & \\
        \hline
        \multirow{2}{*}{
        \cite{CerveraLierta2022designofquantum}} & \multirow{2}{*} {Allowed} & \multirow{2}{*} {Not allowed} & \multirow{2}{*} {$n>4$, $d < \dfrac{n}{2}$} & \multirow{2}{*} {\Cref{quantumconjecture}}\\
        & & & & \\
        \hline
        {
        \cite{Kevin}} & \multirow{2}{*} {Allowed} & \multirow{2}{*} {Allowed} &  {$n=4$, $d \leq 3$} & \multirow{2}{*}{\parbox{3cm}{Proved with \\ Gröbner basis}}  \\
        & & & & \\
        \hline
         \multirow{2}{*}{This work} & \multirow{2}{*} {Not allowed} &\multirow{2}{*} { Allowed} & \multirow{2}{*} {$n>4$, $d < \dfrac{n}{\sqrt{2}}$} & \multirow{2}{*}{\parbox{3cm}{Proved in \\ \Cref{main_thm}}} \\
        & & & & \\
        \hline
    \end{tabular}
\end{table}



Krenn and Gu conjectured that physically it is not possible to generate a GHZ state of dimension $d>2$ with more than $n=4$ photons with perfect quality and finite count rates without additional resources \cite{krenn2019questions}. In graph-theoretic terms, their conjecture states that
\begin{conjecture}[Krenn-Gu Conjecture]\label{krenn_gu_conj}
For a graph $G$ which is not isomorphic to $K_4$, $\mu(G)\leq 2$.   
\end{conjecture}
While proving their conjecture would immediately lead to new insights into resource theory in quantum optics, finding a counter-example would uncover new peculiar quantum interference effects of a multi-photonic quantum system. 

However, when multi-edges and bi-chromatic edges are allowed, even for a GHZ experiment graph with just $4$ vertices, the question of whether the dimension is bounded or not looks surprisingly challenging to prove analytically. Only recently, Mantey proved this with extensive use of computers \cite{Kevin}. So, a general bound on the dimension as the function of the number of vertices of the experiment graph remains elusive. This motivated researchers to look at the problem by restricting the presence of bi-chromatic edges and multi-edges. The state-of-the-art results and conjectures are listed in \Cref{tab:comparison}.

\textbf{Absence of destructive interference.} In quantum physical terms, this is equivalent to saying every $n$-photon amplitude which is generated can be observed in the detectors, i.e.,  there are no two multi-photon terms with the same modes but opposite amplitudes that cancel. This special case does not allow true quantum resources and reduces the problem to a simple combinatorial question on unweighted coloured graphs.

Bogdanov \cite{bogdanov} proved that the dimension of a $4$ vertex experiment graph is at most $3$, and an $n\geq 6$ vertex graph is at most $2$. Chandran and Gajjala \cite{Gajjala} gave a structural classification of experiment graphs of dimension $2$. They also proved that if the maximum dimension achievable on a graph without destructive interference is not one, the maximum dimension achievable remains the same even if destructive interference is allowed. Vardi and Zhang \cite{vardi1,vardi2} proposed new colouring problems which are inspired by these experiments and investigated their computational complexity.

\textbf{Absence of multi-edges.} In quantum physical terms, this is equivalent to saying that the two-photon correlations that built up the experimental result can only have one specific mode combination seen in their computational basis. For example, they can be $|0,0\rangle,|0,1\rangle,|1,0\rangle $ or $|1,1\rangle$ but not $|0,D\rangle$ with $|D\rangle = \dfrac{1}{\sqrt{2}} (|0\rangle + |1\rangle)$

Chandran and Gajjala \cite{Gajjala} proved that the dimension of an $n$ vertex experiment graph is at most $n-3$ even when bi-chromatic edges are present for simple graphs. Their techniques can be extended to get the same bound when multi-edges are present and bi-chromatic edges are absent.

\textbf{Absence of bi-chromatic edges.} In quantum physical terms, this is equivalent to saying that the two-photon correlations that built up the experimental result can only have the same mode in their computational basis. For example, they can be $|0,0\rangle,|1,1\rangle$ or $|0,0\rangle+|1,1\rangle$ but not $|0,1\rangle$ or $|1,0\rangle$.

Ravsky \cite{ravsky} proposed a claim connecting this problem to rainbow matchings, i.e., a matching in which all edges have a different colour. Using a result of Kostochka and Yancey \cite{Kostochka2012LargeRM}, he showed that the dimension of an $n$ vertex experiment graph is at most $n-2$. Neugebauer \cite{neugebauer} used these ideas and did computational experiments. For GHZ experiment graphs with a small number of vertices, Cervera-Lierta, Krenn, and Aspuru-Guzik \cite{CerveraLierta2022designofquantum} translated this question into a boolean equation system and found that the system does not have a solution using SAT solvers. In particular, they show that for graphs with monochromatic edges, GHZ states with $n=6$, $d \geq 3$ and $n=8$, $d\geq 4$ cannot exist. The authors further conjectured the following more general claim
\begin{conjecture}[Cervera-Lierta et al. conjecture]
\label{quantumconjecture}
It is not possible to generate an $n>4$ vertex experiment graph with dimension  $d\geq \dfrac{n}{2}$.
\end{conjecture}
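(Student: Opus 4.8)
The plan is to prove, via local sparsification, that no valid experiment graph on $n=|V(G)|>4$ vertices has $\mu(G)\ge n/2$. Honestly, the approach I would try seems to reach only $\mu(G)\le n/\sqrt2$ --- still an improvement over the current $n-2$ bound of Ravsky --- and I indicate at the end the extra ingredient needed to close the remaining gap. Fix a valid $G$ and set $d=\mu(G)$; let $c_1,\dots,c_d$ be the colours whose constant vertex colourings are feasible of weight $1$, and let $H_\ell$ be the spanning subgraph of $G$ consisting of the edges that are monochromatic of colour $c_\ell$. Two facts are immediate: (i) each $H_\ell$ has a perfect matching, and the weights of all perfect matchings of $H_\ell$ sum to $1$; (ii) the $H_\ell$ are pairwise edge-disjoint, so $|E(G)|\ge dn/2$. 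Moreover a monochromatic colouring is never induced by a perfect matching containing a bi-chromatic edge, so bi-chromatic edges and edges running between distinct $H_\ell$'s enter only through the condition ``every non-monochromatic colouring has weight $0$'', an exponentially large family of cancellation equations; the whole game is to exploit these without drowning in them.

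\textbf{Step 1 (local sparsification).} First I would show that $G$ can be replaced by a valid graph $G'$ with $V(G')=V(G)$ and $\mu(G')=\mu(G)$ that is \emph{locally sparse}: each colour class $H_\ell'$ is as small as possible --- ideally a perfect matching of weight $1$ --- and $d(v,i)$ is bounded at every vertex. The pruning algorithm deletes one edge at a time whenever deletion preserves validity; the subtlety is that deleting an edge only shrinks the set of perfect matchings inducing a given colouring, so it can turn a cancelling (weight-$0$) non-monochromatic colouring into a nonzero one. The core lemma must therefore say that whenever a colour class or a vertex neighbourhood is ``too dense'', there is \emph{some} edge whose deletion keeps every cancellation equation satisfied and keeps every $H_\ell$ spanned by a perfect matching --- e.g.\ by deleting edges in cancelling groups, or by a rerouting argument along alternating structures. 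I expect this to be the main obstacle: pinning down exactly how sparse $G'$ can be forced to be while provably preserving both the dimension and all non-monochromatic cancellations.

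\textbf{Step 2 (a sparse certifying subgraph).} In $G'$ I would then isolate a subgraph $F\subseteq G'$ whose edge count is (a) at most roughly $\binom n2$, since $G'$ is simple, and (b) at least quadratic in $d$. For (b) the idea is that the cancellation equations force, for each pair of colours $\{c_\ell,c_m\}$, a distinct ``witness'' --- for instance a bi-chromatic edge with a $c_\ell$-half and a $c_m$-half, or a forbidden monochromatic edge --- so the $\binom d2$ pairs consume $\Omega(d^2)$ of the graph's quadratic capacity. Combining this with (ii) gives an inequality of the shape $\frac{dn}{2}+\alpha\binom d2\le\binom n2$, and the best constant $\alpha$ the argument can honestly supply yields $d\le n/\sqrt2$. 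The rainbow-matching viewpoint of Ravsky together with the Kostochka--Yancey edge bound is the natural engine here, since a sufficiently large rainbow structure in the pruned, locally proper graph would itself induce a forbidden non-monochromatic colouring and thus cap the edge count of $F$.

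\textbf{Step 3 (from $n/\sqrt2$ to $n/2$).} To reach the conjectured $\mu(G)<n/2$ one would need $\alpha=2$ above, i.e.\ each pair of colours charged \emph{two} distinct edges --- intuitively both a forced bi-chromatic connector and a forced absent monochromatic edge. Proving this doubled accounting, together with settling the finitely many small cases ($K_4$ being the genuine exception outside the stated range), is exactly the gap between what local sparsification currently delivers and the full Cervera-Lierta--Krenn--Aspuru-Guzik conjecture; I would attack it through a finer analysis of the alternating walks between two colour classes in the pruned graph, and it is here that I expect the argument either to succeed or to stall.
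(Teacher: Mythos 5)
There is a genuine gap: the statement you are asked to prove is the Cervera-Lierta--Krenn--Aspuru-Guzik conjecture ($d\ge n/2$ impossible for $n>4$), and your proposal does not prove it --- you say so yourself in Step~3, where the ``doubled accounting'' $\alpha=2$ is only wished for, not established. In fact the paper does not prove it either: it is stated there as an open conjecture, and the paper's actual theorem is the strictly weaker bound $\mu(G)\le n/\sqrt2$. So with respect to the target statement, both you and the paper stop short; but your text should not be read as even a conditional proof of the conjecture, since the entire content of Step~3 (charging each pair of colours two distinct edges, via alternating walks between colour classes) is an unexamined hope with no argument behind it.

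Even for the $n/\sqrt2$ bound you claim to ``honestly reach'', the sketch has unproven load-bearing steps and does not match the mechanism that actually works in the paper. Your Step~1 lemma --- that one can prune to a valid $G'$ in which every colour class is essentially a perfect matching while preserving all cancellation equations --- is exactly the kind of statement you flag as the main obstacle, and it is not what the paper proves; the paper only passes to an edge-minimum valid graph and shows (via a local deletion argument around a single vertex) that any edge $e=\{u,v\}$ with $d(u,c(e))=1$ must also have $d(v,c(e))=1$, i.e.\ is \emph{colour-isolated}. Your Step~2 count --- a distinct witness edge for each of the $\binom d2$ colour pairs, giving an inequality of the shape $\frac{dn}{2}+\alpha\binom d2\le\binom n2$ --- is asserted without justification and is not how the $n/\sqrt2$ bound is obtained. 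The paper instead uses a degree count to show each vertex has at least $2\mu-n+1$ colour-isolated edges, averages over colours to extract a large matching of $1$-coloured isolated edges spanning a set $R$ with $|R|\ge(2\mu-n)n/\mu$, and then builds a \emph{representative sparse graph} $\chi$ from the neighbourhoods of one such edge $\{u,v\}$, bounding $|E(\chi)|\le|R|-1$ by exhibiting, for any putative third edge between $\{u,v\}$ and another isolated edge $\{u',v'\}$, a single non-monochromatic vertex colouring whose weight would be forced nonzero --- a contradiction with validity. This yields $2\mu\le 2|U|+|R|=2n-|R|\le n^2/\mu$, hence $\mu\le n/\sqrt2$. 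If you want to salvage your write-up, you would need either to supply proofs for your sparsification and pair-witness claims, or to adopt the colour-isolated-edge/representative-sparse-graph route; and in either case the conjectured $n/2$ threshold remains out of reach of both arguments.
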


Note that \Cref{quantumconjecture} is much weaker than \Cref{krenn_gu_conj}. One may wonder why should \Cref{quantumconjecture} be stated since its correctness would be implied if \Cref{krenn_gu_conj} is proved. However, \Cref{krenn_gu_conj} seems to be strong from the currently available evidence. In particular, the major part of the intuition to formulate \Cref{krenn_gu_conj} stems from the failure of computational experiments using PyTheus \cite{AIquantum2} to produce small graphs with desired properties. However, it is well known that many conjectures formulated based on experiments on graphs of small size can potentially be wrong. This is because many intricate structural features of graphs manifest themselves only in large graphs. In our opinion, experiment graphs with reasonably high (a constant $\geq 3$) dimensions may exist, albeit such graphs are likely to be extremely structured and, therefore, rare to find.




\subsection{Our result}

In a \emph{simple} GHZ experiment graph with $n$ vertices, as every vertex has at most $n-1$ neighbours and each colour has at least one monochromatic edge incident on each vertex (proved formally in \Cref{obs:monoedge}), a trivial upper bound of $n-1$ can be obtained on the number of colours and therefore, the dimension. We can achieve the same bound with a more careful argument even when multi-edges are allowed, but bi-chromatic edges are absent \cite{neugebauer}. All the results which appeared so far in the literature either improve an additive factor over this trivial bound or work only on graphs with at most $8$ vertices. In this work, using the technique of \textit{local sparsification} for experiment graphs, we overcome this barrier by proving the following:
\begin{theorem}
\label{main_thm}
It is not possible to generate a simple $n>4$ vertex GHZ experiment graph with dimension  $d\geq \dfrac{n}{\sqrt{2}}$.
\end{theorem}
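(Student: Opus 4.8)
My plan is to recast validity as a polynomial identity, pass through an edge-pruning (``local sparsification'') step to a graph with a rigid matching structure, and finish with a quadratic edge count. To a weight assignment $w$ and colouring $c$ associate the polynomial $\Phi_G=\sum_{P}w(P)\prod_{v\in V(G)}x_{v,\,c(e_P(v),v)}$ in formal variables $x_{v,i}$, summed over perfect matchings $P$, where $e_P(v)$ is the edge of $P$ at $v$. Unwinding the definitions, $G$ is valid of dimension $d=\mu(G)$ precisely when $\Phi_G=\sum_{i\in S}\prod_{v}x_{v,i}$ for a set $S$ of colours with $|S|=d$; relabel $S=[d]$. Specialising the variables (for instance $x_{v,i}=x_{v,j}=1$ and all others $0$) extracts the facts I will use: for each $i\in[d]$ the subgraph $G_i$ spanned by the monochromatic colour-$i$ edges carries a perfect matching whose weights sum to $1$, so by \cref{obs:monoedge} every vertex has a monochromatic edge of each colour $i\in[d]$; and every perfect matching of $G$ inducing a non-monochromatic colouring must be cancelled in total weight, both globally and already inside $G_i\cup G_j$ together with the bi-chromatic $\{i,j\}$-edges.

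\textbf{Local sparsification (the crux).} Prune $G$ by repeatedly deleting an edge, re-normalising the remaining weights when needed, as long as the graph stays valid of dimension $d$; let $G'$ be a resulting edge-minimal representative. The aim is to show $G'$ is \emph{rigid}: (i) for each $i\in[d]$ the subgraph $G'_i$ is a single perfect matching $M_i$ (the weight-$1$ condition is witnessed by one matching, and there are no spare monochromatic edges); (ii) each bi-chromatic edge is genuinely needed by a cancelling non-monochromatic matching, which bounds how many there are; (iii) $G'$ may be taken simple (bounded multiplicity). Granting (i)--(ii), the cancellation fact applied inside $G'_i\cup G'_j$ forces $M_i\cup M_j$ to be one Hamiltonian cycle $C_{ij}$ on $V(G')$: a proper alternating subcycle $C\subsetneq M_i\cup M_j$ would make the colouring ``$j$ on $V(C)$, $i$ elsewhere'' feasible with nonzero weight, contradicting validity. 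Consequently $M_1,\dots,M_d$ are pairwise edge-disjoint perfect matchings and $G'$ has no perfect matchings beyond the $M_i$ and the weight-cancelling non-monochromatic ones.

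\textbf{The quadratic count.} Fix a root $r$. For $i\ne j$ let $a=M_i(r)$ and $b=M_j(a)$ --- two steps out of $r$ along $C_{ij}$ --- and put $\phi(i,j)=\{a,b\}\in M_j$; set $\phi(i,i)$ equal to the colour-$i$ edge at $r$. The $M_j(r)$ are distinct and the $M_j$ are edge-disjoint, so a coincidence $\phi(i,j)=\phi(i',j')$ with $(i,j)\ne(i',j')$ forces $j=j'$, and, after discarding the immediate cases, forces $i,i'\ne j$ with the two endpoints $M_i(r),M_{i'}(r)$ of the edge $M_j$-adjacent --- i.e.\ a tri-coloured triangle on $r,a,b$ (colours $i,j,i'$). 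The heart of the argument is to rule this out (or confine it to $O(d)$ instances overall): re-routing $C_{ij}$ and $C_{ii'}$ through the chord $ab$ yields a perfect matching of $G'$ whose induced colouring is non-monochromatic yet uncancelled, contradicting rigidity. Thus $\phi$ is injective on $[d]^2$ and $G'$ contains at least $d^2$ distinct edges.

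\textbf{Conclusion and expected obstacle.} Since $G'$ is simple on $n=|V(G)|$ vertices, $d^2\le\binom{n}{2}<\tfrac{n^2}{2}$, hence $d<\tfrac{n}{\sqrt2}$ whenever $n>4$; the excluded value $n=4$ is exactly where rigidity and injectivity break and where $K_4$ realises $d=3$. I expect the sparsification step to be the main difficulty: certifying that an apparently prunable edge is not in fact propping up a fragile destructive-interference cancellation, and doing this uniformly in the presence of multi-edges and bi-chromatic edges, is precisely what has kept all prior bounds within an additive constant of the trivial linear one. The collision analysis in the quadratic count --- needed to keep the constant at $\sqrt2$ rather than a weaker value --- is the secondary difficulty.
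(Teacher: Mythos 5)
There are two genuine gaps, and they are exactly the two steps you defer. First, your ``rigidity'' claim (i) --- that after pruning each monochromatic colour class $G'_i$ collapses to a single perfect matching $M_i$ --- does not follow from edge-minimality, and ``deleting an edge and re-normalising'' is not an available move: removing a colour-$i$ edge changes the weight of the monochromatic colour-$i$ colouring \emph{and} of every non-monochromatic colouring whose matchings use that edge, and you give no argument that the surviving weights can be re-tuned to restore validity (that is a global constraint system, which is precisely the hard part of the problem). What edge-minimality actually yields is far weaker: the paper's pruning lemma (\cref{c_isolated_edges}) only shows that an edge whose colour-degree is $1$ at one endpoint is colour-isolated, and the simple-graph degree count (\cref{deg_obs}) then gives each vertex merely $2\mu-n+1$ such colours --- nothing close to a perfect matching for every colour. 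Second, your Hamiltonian-cycle claim and the triangle-exclusion step both assert that some non-monochromatic colouring is ``feasible with nonzero weight'' or ``uncancelled''; with complex weights and bi-chromatic edges this requires controlling \emph{every} perfect matching that induces the colouring, and your item (ii) does not provide that control. The paper's analogous step, \cref{struct_lemma}, works only because the bad colouring is constructed inside the set $R$ of colour-degree-one vertices around a single isolated edge, so the inducing matchings are forced edge by edge on $R$ and the rest of the graph contributes a common nonzero factor (the monochromatic weight on $U$).

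A quantitative red flag confirms the steps are too strong: if your map $\phi$ were injective, its image lies in $\bigcup_j M_j$, which under your own rigidity has only $dn/2$ edges, so you would conclude $d\le n/2$ --- that is \cref{quantumconjecture} itself, not merely the $n/\sqrt2$ bound. Obtaining the open conjecture from the two steps you yourself label the ``main'' and ``secondary'' difficulties is strong evidence they cannot be established by the sketched means. The paper's route avoids both: it counts at least $(2\mu-n+1)\frac{n}{2}$ colour-isolated edges, averages to find one colour with a large isolated matching, and then bounds the monochromatic edges of the other colours seen at the two endpoints of one isolated edge via the representative sparse graph, giving $2\mu\le 2|U|+|E(\chi)|+1$ with $|E(\chi)|\le |R|-1$, which closes to $\mu\le \frac{n}{\sqrt2}$ without ever needing a perfect matching per colour or a Hamiltonian-cycle structure.
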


In graph theoretic terms, \Cref{main_thm} is equivalent to stating that $\mu(G)\leq \dfrac{|V(G)|}{\sqrt{2}}$. This translates to saying that it is not possible to produce a GHZ state of $n$ particles, with $\geq \dfrac{n}{\sqrt{2}}$ dimensions using this graph approach without additional quantum resources (such as auxiliary photons). Note that our bound holds for simple graphs even when bi-chromatic edges are allowed.

\section{Edge pruning}
\label{prelim}
We introduce the concept of an edge minimum GHZ experiment graph. A GHZ experiment graph $G$ is said to be edge minimum if there is no other graph $G'$ such that $|V(G')|=|V(G)|$, $\mu(G)=\mu(G')$ and $|E(G')|< |E(G)|$. Such graphs correspond to the experiments to create GHZ states with a given number of particles and a given dimension using minimum resources and are of interest to experimental physicists. Note that all GHZ experiment graphs need not be edge minimal. For instance, \Cref{fig:main_example} is a GHZ experiment graph, but it is not edge minimum. This is because the graph in \Cref{fig:main_example_2} is also a GHZ experiment graph with $6$ vertices and dimension $2$ and has fewer edges than the graph in \Cref{fig:main_example}. On the other hand,  the graph in \Cref{fig:main_example_2} is an edge minimum GHZ experiment graph. This is because in any $6$ vertex GHZ experiment graph of dimension $2$, there must be at least two perfect matchings of different colours and, hence, at least $6$ edges. The reader may notice that proving \Cref{main_thm} for Edge minimum GHZ experiment graphs implies \Cref{main_thm} is true for all GHZ experiment graphs.

\begin{figure}[t!]
    \centering   
    \begin{minipage}{0.85\textwidth}
\centering    
{\includegraphics[width=63mm]{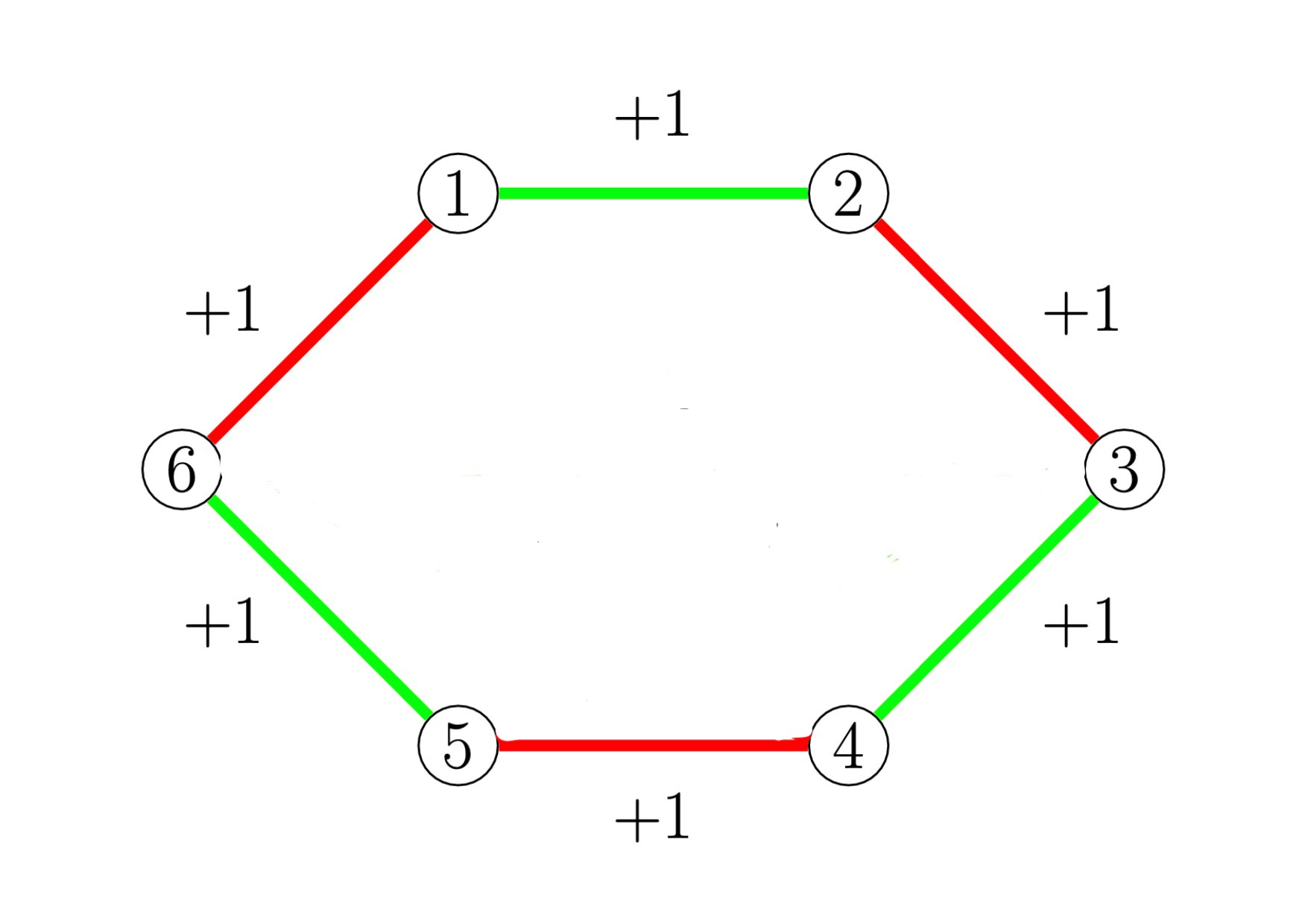}}
\caption{An edge minimum GHZ experiment graph}
\label{fig:main_example_2}
    \end{minipage}
   \hspace*{1cm}
\end{figure}


We define a colour $c$ to be \textit{participating} in an experiment graph $G$ if there is at least one half-edge in $G$ of colour $c$. We claim that in any edge minimum GHZ experiment graph, all monochromatic vertex colourings of participating colours are feasible. Towards a contradiction, let us assume that there exists an edge in the minimum GHZ experiment graph $G$ in which, for some $i$, there exists a monochromatic vertex colouring over $V(G)$ of colour $i$ which is not {feasible}. We can obtain a subgraph $G'$ by discarding all the bichromatic edges in which a half-edge is of colour $i$ and all monochromatic edges of colour $i$ from $G$. Observe that $G'$ would remain a GHZ experiment graph of the same dimension. This contradicts the edge minimality of $G$. Therefore, in an edge minimum GHZ experiment graph, all monochromatic vertex colourings of participating colours are feasible

A graph is matching covered if every edge of it is part of at least one perfect matching. If an edge $e$ is not part of any perfect matching $M$, then we call the edge $e$ to be redundant. By removing all redundant edges from the given graph $G$, we get its unique maximum matching covered sub-graph. For instance, the graph in \Cref{fig:mcg} is the matching covered graph of \Cref{fig:pruning_first}.
\begin{observation}
Edge minimum GHZ experiment graphs are matching covered.
\end{observation}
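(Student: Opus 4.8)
The plan is a short contradiction argument. Suppose, for contradiction, that $G$ is an edge minimum valid experiment graph that is \emph{not} matching covered. Then there exists a redundant edge $e \in E(G)$, i.e.\ an edge lying in no perfect matching of $G$. Let $G' = G - e$, equipped with the colouring $c$ and the weight function $w$ restricted to $E(G')$; note that $G'$ is still simple, that $V(G') = V(G)$, and that $|E(G')| = |E(G)| - 1$.

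The crux is that deleting a redundant edge changes neither the collection of perfect matchings nor any vertex-colouring weight. Indeed, every perfect matching of $G'$ is a set of edges of $G$ covering each vertex exactly once, hence is already a perfect matching of $G$; conversely, no perfect matching of $G$ uses $e$, so all of them survive in $G'$. Thus $G$ and $G'$ have exactly the same perfect matchings, and therefore, for every vertex colouring $vc$, the set of perfect matchings inducing $vc$ --- and hence $w(vc)$ --- is the same whether computed in $G$ or in $G'$. In particular $G'$ is again valid (every feasible monochromatic vertex colouring still has weight $1$, and every non-monochromatic vertex colouring still has weight $0$), and $\mu(G') = \mu(G)$.

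Now $G'$ witnesses $|V(G')| = |V(G)|$, $\mu(G') = \mu(G)$, and $|E(G')| < |E(G)|$, contradicting the edge minimality of $G$. Hence $G$ has no redundant edge, i.e.\ it is matching covered. There is no genuine obstacle here; the only point worth spelling out is the equality of the perfect-matching families of $G$ and $G'$, from which the preservation of validity and of the dimension is immediate, and one may also note that this particular statement does not depend on the preceding reduction that all monochromatic vertex colourings of an edge minimum graph are feasible.
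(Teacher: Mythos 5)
Your proposal is correct and follows essentially the same argument as the paper: the paper deletes all redundant edges at once (passing to the maximum matching covered subgraph), while you delete a single redundant edge, but in both cases the key point is that the perfect matchings, hence all vertex-colouring weights, validity and the dimension are unchanged, contradicting edge minimality.
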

\begin{proof}
Suppose that there is an edge minimum GHZ experiment graph $G$ which is not matching covered. Consider its matching covered subgraph $H$. By definition, every perfect matching of $G$ is contained in $H$. Therefore, for any vertex colouring $vc$, the set of perfect matchings inducing it in $G$ and $H$ is the same, and therefore, the weight of the vertex colouring $vc$ in $H$ and $G$ are equal; hence, $H$ is also a GHZ experiment graph and $\mu(G)=\mu(H)$. Since $G$ is not matching covered, we know that $H$ is a proper subgraph of $G$ and hence $|E(H)|<|E(G)|$. This contradicts the edge minimality of $G$.
\end{proof}
While being matching covered is a necessary condition for a GHZ experiment graph, it is not a sufficient condition. For instance, \Cref{fig:main_example} is a GHZ experiment graph which is matching covered, but it is not edge minimum as we have discussed earlier.

\begin{observation}
\label{obs:monoedge}
In an edge minimum GHZ experiment graph $G$, for any vertex $v\in V(G)$ and colour $i \in [\mu(G)]$, there exists at least one monochromatic edge, say $e$ incident on $v$ such that $c(e)=i$.    
\end{observation}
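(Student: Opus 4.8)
The plan is simply to unwind the definitions. First I would fix the indexing convention that is implicit in the phrase ``colour $i \in [\mu(G)]$'': since $\mu(G)$ counts the feasible monochromatic vertex colourings of weight $1$, we may assume the colours that actually occur as such monochromatic colourings are exactly $1, 2, \ldots, \mu(G)$. Thus for each $i \in [\mu(G)]$ the all-$i$ vertex colouring $vc_i$, defined by $vc_i(u) = i$ for every $u \in V(G)$, is feasible.

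Next I would invoke feasibility directly: by definition there is a perfect matching $P$ of $G$ that induces $vc_i$. Recalling what ``induces'' means, for every vertex $u$ the unique edge $e \in P$ incident on $u$ satisfies $c(e,u) = vc_i(u) = i$.

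Now take the given vertex $v$. Because $P$ is a perfect matching, there is a unique edge $e_v \in P$ incident on $v$; write its other endpoint as $u$. The ``induces'' property applied at $v$ gives $c(e_v, v) = i$, and applied at $u$ gives $c(e_v, u) = i$. Hence $c(e_v, v) = c(e_v, u) = i$, so $e_v$ is monochromatic with $c(e_v) = i$, and it is incident on $v$ — exactly the edge asserted to exist.

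I do not expect any real obstacle: the statement follows immediately from the definitions of feasibility and of the vertex colouring induced by a perfect matching. The only points requiring a word of care are the indexing convention (that a colour $i \in [\mu(G)]$ genuinely names a feasible monochromatic colouring) and the trivial observation that a perfect matching covers $v$, so the relevant edge is there to be picked. The argument should take only two or three sentences.
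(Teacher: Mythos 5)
Your proposal is correct and follows essentially the same route as the paper: the all-$i$ colouring has weight $1$ (hence is feasible), so some perfect matching induces it, and the edge of that matching covering $v$ must be monochromatic of colour $i$. Your extra care about the indexing convention and about both half-colours of the edge equalling $i$ is a harmless elaboration of what the paper states more tersely.
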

\begin{proof}
By definition, the weight of the monochromatic vertex colouring with colour $i$ is non-zero. Therefore, there must be a monochromatic perfect matching $P$ in which all edges are of colour $i$. By the definition of perfect matching, for each vertex $v$, an edge $e \in P$ exists, which is incident on $v$. This $e$ must be of colour $i$ as $e \in P$.    
\end{proof}

{\definition{If an edge $e=\{u,v\}$ is monochromatic and $deg(u,c(e))=deg(v,c(e))=1$, then $e$ is said to be colour isolated.}}

In other words, if $e=\{u,v\}$ is colour isolated, it must be monochromatic, say of colour $i$, and there can not be any $u$-half edges other than $e_u$ and any other $v$-half edges other than $e_v$ which are of colour $i$. For instance, in  \Cref{fig:main_example}, the edges $\{1,6\}, \{1,2\}$ and $\{2,3\}$ are colour isolated.

\begin{lemma}
\label{c_isolated_edges}
Let $G$ be an edge minimum GHZ experiment graph. For an edge $e=\{u,v\}  \in E(G)$ incident on $u\in V(G)$ with $c(e_u)=i$. If $deg(u, i)=1$, then $e$ is monochromatic and $e$ is a colour-isolated edge. In other words, $deg(v, i)=1$.
\end{lemma}

\begin{proof}
Let $e=\{u,v\}$. Since $deg(u,i) =1$, there is exactly one half-edge of colour $i$ incident on  $u$, namely the $u$-half-edge of $e$.  But from  \Cref{obs:monoedge}, we know that 
there is at least one monochromatic edge of colour $i$ incident on $u$.
This implies that $e= \{u,v\}$ must be a monochromatic edge with $c(e)=i$. It follows that $deg(v, i) \geq 1$. Towards a contradiction, suppose $e$ is not a colour-isolated edge. Therefore, $deg(v, i)  \geq 2$. 

We now construct $G'$, a subgraph of $G$, by \emph{pruning} each edge $e'\neq e$,  which is incident on $v$ and has $c(e'_v)=c(e)=i$,
that is, those edges other than $e$ with their $v$-half-edges coloured $i$. We illustrate pruning in \Cref{fig:three_figures}.

Note that edge weights do not play any role in pruning. The pruned edges could be part of several perfect matchings, and it might be the case that the pruned edges are contributing to the weights of several vertex colourings. With the removal of such edges, the resultant graph might end up being an experiment graph, which is not a GHZ experiment graph. We show that this is not the case, i.e., we show that $G'$ is indeed a GHZ experiment graph! We do this by showing that the weight of any feasible vertex colouring $vc$ in $G'$ is equal to the weight of $vc$ in $G$ in \Cref{pruning_obs}. 

\begin{figure}
    \centering
    \begin{subfigure}[b]{0.32\textwidth}
        \centering
        \includegraphics[width=\textwidth]{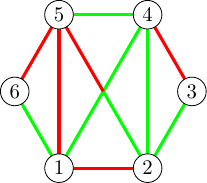}
        \caption{Graph $G$}
        \label{fig:pruning_first}
    \end{subfigure}
    \hfill
    \begin{subfigure}[b]{0.32\textwidth}
        \centering
        \includegraphics[width=\textwidth]{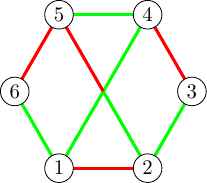}
        \caption{Matching covered graph of $G$}
        \label{fig:mcg}
    \end{subfigure}
    \hfill
    \begin{subfigure}[b]{0.32\textwidth}
        \centering
        \includegraphics[width=\textwidth]{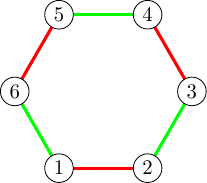}
        \caption{Pruned matching covered graph of $G$}
        \label{fig:pruning_second}
    \end{subfigure}
    \caption{The edges $\{1,5\}$ and $\{2,4\}$ are pruned as they are not part of any perfect matching. This gives us the matching covered graph of $G$ in \Cref{fig:mcg}. Observe that $deg(6,1)=1$. So we prune the edge $\{1,4\}$. Similarly, the edge $\{2,5\}$ is pruned as $deg(6,0)=1$. This leads to the pruned graph in \Cref{fig:pruning_second} ($G'$). 
    }
    \label{fig:three_figures}
\end{figure}

\begin{observation}\label{pruning_obs}
$G'$ (the subgraph of $G$ formed above by pruning) is a GHZ experiment graph.
\end{observation}
\begin{proof}
Let $vc$ be a feasible vertex colouring of $G'$. First, note that each perfect matching $P$ that can induce $vc$ in $G'$ is also present in $G$ since $G'$ is a subgraph of $G$. Therefore, $P$ will contribute to the weight of $vc$ in $G$ as well. We call a perfect matching $P'$ to be \textit{exclusive} if it is present in $G$ but not in $G'$. Our only concern is that there might be an exclusive perfect matching $P'$, which induces $vc$. We will show that this is not possible. We will prove this by showing that any exclusive perfect matching $P'$ of $G$ will always induce a vertex colouring which is infeasible in $G'$. Formally, consider an exclusive perfect matching $P$. Let $vc$ be the vertex colouring induced by $P$ in $G$. We now claim that $vc$ is not feasible in $G'$.

We first prove that $vc(v) = c(e) =i$ and $vc(u) \neq c(e)=i$. As $P$ is not contained in $G'$, there must be an edge $e'\in P$ such that $e'\in G$ and $e'\notin G'$. By construction, such an $e'\neq e$, must be incident on $v$ and has $c(e'_v)=i$. Therefore, $P$ induces the colour $c(e)=i$ on $v$ and hence $vc(v)=c(e)=i$. Since $e'$ is already incident on $v$, it must be the case that $e \notin P$, since $P$ is a perfect matching and by definition of perfect matching, two different edges incident on the same vertex cannot be simultaneously in $P$.
But $P$ should have an edge, say $e'' \ne e$ incident on $u$ also.  
As $deg(u,i)=1$ and $e \ne e''$, the color of the $u$-half-edge of $e''$
has to be different from $i$. It follows that $vc(u) =c(e''_u) \neq c(e)$. 

We now claim that any vertex colouring $vc$ with $vc(u) \neq c(e)$ and $vc(v) = c(e)$ is infeasible in $G'$. Towards a contradiction, suppose such a $vc$ is feasible. It follows that there must be a perfect matching $P'$ inducing the vertex colouring $vc$. By construction, $e$ is the only edge of colour $c(e)$ incident on $v$ in $G'$. Therefore, $e \in P'$. As $e$ is also incident on $u$, $P'$ induces the colour $c(e)$ on $u$. Also, it follows that $vc(u) = c(e)$. This contradicts the assumption that $vc(u) \neq c(e)$. Therefore, a vertex colouring $vc$ with $vc(u) \neq c(e)$ and $vc(v) = c(e)$ is infeasible in $G'$.

Therefore, if a vertex colouring $vc$ is feasible in $G'$, then $vc$ is induced by the same set of perfect matchings in both $G$ and $G'$. It follows that the weight of any feasible vertex colouring $vc$ in $G'$ is equal to the weight of $vc$ in $G$; hence, $G'$ is a GHZ experiment graph. 

\end{proof}
We have proved that for every perfect matching $P$ which contains the pruned edge $e'$, the the vertex colouring $vc$ induced by $P$ has the property that $vc(v) = c(e) =i$ and $vc(u) \neq c(e)=i$. Therefore, any such $vc$ is not monochromatic. Therefore, the monochromatic vertex colouring of colour $i$ in $G'$ has a weight equal to the weight of monochromatic vertex colouring of colour $i$ in $G$, which is $1$ for all $i \in [\mu]$. It follows that $\mu(G)=\mu(G')$. 

We will now conclude the proof of \Cref{c_isolated_edges}. Since $deg(v, i)  \geq 2$, there exists at least one edge $e'\neq e$ with its $v$-half-edges coloured $i$. By construction such an $e'$ will be pruned. Therefore, $G'$ will have at least one edge less than $G$. But this contradicts the edge minimality of $G$ as $G'$ is a GHZ experiment graph with $n$ vertices, $\mu(G)=\mu(G')$ and has fewer edges than $G$.

\end{proof}




\section{Bounding the dimension}
\subsection{Proof sketch}

If $\mu(G) \le n/2$, our theorem is trivially true since $n/2 \le n/\sqrt 2$.  So, we can concentrate on the case
where $\mu (G) > n/2$. In other words, technically speaking, we will be showing that if $\mu(G) > n/2$, the $\mu(G) \le n/\sqrt 2$, from
which  the stronger statement for any $G$, $\mu(G) \le n / \sqrt 2$, obviously follows. 
We will complete the proof in two steps.

\noindent {\bf Step 1:}  We will show that  there is a color $i \in [\mu(G)]$, such that there exists
a matching  (referred to as a special matching)   $M$  with $|M| \ge  n-\dfrac{n^2}{2\mu}$ such that all the edges of $M$ are color-isolated and are 
of color $i$.  The details  of this will be presented in section \Cref{sec:larg_match}.

\noindent {\bf Step 2:} In \Cref{structural_analysis}, we will  show that  if $M$ is a special matching
 then $\mu \leq n-|M|$.

Combining Step 1 and Step 2, we can then immediately infer that    
$$\mu \leq   n-|M| \leq n-(n-\dfrac{n^2}{2\mu}) \leq \dfrac{n^2}{2\mu}$$
Therefore, the required result  $\mu \leq \dfrac{n}{\sqrt{2}}$ would follow. 

Now we proceed to prove the claims in Step 1 and Step 2.

\subsection{Existence of a large special matching}
\label{sec:larg_match}
Let $k(v)$ denote the number of colours such that the colour degree of
$v$ with respect to that colour is exactly $1$.  In notation $k(v) = |\{ i \in [\mu]: 
deg(v,i) = 1\}|$. 
\begin{observation}
\label{deg_obs}
For all $v \in V(G)$, $k(v) \ge 2\mu-n+1$. 
\end{observation}
\begin{proof}
Suppose \Cref{deg_obs} is false. Then there exists a $v \in V(G)$ such that $k(v) < 2 \mu -n + 1$. Since $k(v)$ is integral, $k(v) \leq 2 \mu -n $.  We can now assume
that  $k(v) = 2\mu -n -i$ for some $i \in  [0,2\mu-n]$. Note that when $i=0$, $k(v)$ is $2\mu -n$ and when $i=2\mu-n$, $k(v)$ is $0$. $2\mu -n -i$ colors have colour-degree exactly $1$ on  $v$;  therefore, the remaining  $n-\mu+i$ colours have colour-degree  at least $2$ on $v$.
Counting one half-edge for each of the first types of colours and at least two half-edges for each of the second type of colours, we get that there are $2\mu-n-i+2(n-\mu+i)=n+i\geq n$ neighbours to $v$ (as $i \geq 0$). But since $G$ is a simple graph, $v$ can have at most $n-1$ neighbours. Contradiction.  Since we started with
the assumption $k(v) < 2 \mu -n + 1$, the converse should be true. 
\end{proof}

\begin{observation}
\label{large_col_iso_edges}
$G$ has at least $(2\mu-n+1)\dfrac{n}{2}$ colour isolated edges.
\end{observation}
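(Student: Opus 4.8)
The plan is to combine \cref{deg_obs} and \cref{c_isolated_edges} with a double-counting argument over the vertices. First I would fix an arbitrary vertex $v \in V(G)$. By \cref{deg_obs}, there is a set $S_v \subseteq [\mu]$ of at least $2\mu-n+1$ colours with $d(v,i)=1$ for every $i \in S_v$. For each such $i$, let $e_i$ be the unique edge incident on $v$ with $c(e_i,v)=i$. By \cref{obs:monoedge} there is a monochromatic edge of colour $i$ incident on $v$, and since $e_i$ is the only edge of colour $i$ at $v$, that edge must be $e_i$; hence $e_i$ is monochromatic. Then \cref{c_isolated_edges}, applied to $e_i$ via the endpoint $v$, immediately gives that $e_i$ is colour-isolated.

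Next I would note that the edges $\{e_i : i \in S_v\}$ are pairwise distinct: an edge $e$ incident on $v$ determines the single value $c(e,v)$, so it can serve as $e_i$ for at most one colour $i$. Consequently, each vertex $v$ is incident on at least $2\mu-n+1$ colour-isolated edges. Summing this over all $n$ vertices, the total number of (vertex, colour-isolated edge) incidences is at least $n(2\mu-n+1)$. Since every colour-isolated edge has exactly two endpoints and is therefore counted exactly twice in this sum, the number of colour-isolated edges in $G$ is at least $\dfrac{n(2\mu-n+1)}{2} = (2\mu-n+1)\dfrac{n}{2}$, which is the claimed bound.

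I do not anticipate a genuine obstacle in this argument; it is essentially a handshake/double-counting lemma. The only points that require a moment of care are that the colour-isolated edges counted at a single vertex are genuinely distinct (handled by the remark that $e\mapsto c(e,v)$ is single-valued) and that having one endpoint of colour degree one already forces an edge to be colour-isolated at \emph{both} ends — but this is exactly the content of \cref{c_isolated_edges}, so no additional work is needed. Note also that in the context where this observation is used we have $\mu>\tfrac{n}{\sqrt 2}$, so $2\mu-n+1>0$ and the bound is non-trivial.
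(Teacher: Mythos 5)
Your argument is correct and is essentially the paper's own proof: combine \cref{deg_obs} with \cref{c_isolated_edges} to get at least $2\mu-n+1$ colour-isolated edges incident on each vertex, then divide by two since each edge is counted at its two endpoints. Your write-up merely makes explicit the distinctness of the edges at a given vertex and the double-counting step, which the paper leaves implicit.
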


\begin{proof}
From \Cref{c_isolated_edges}, we know that for every colour $i$ with $deg(u, i)=1$, there is a colour isolated edge of colour $i$ incident on $u$. Along with \Cref{deg_obs}, this implies that there are at least $2\mu-n+1$ colour-isolated edges that are incident on each vertex. Since there are $n$ vertices and each colour-isolated edge is incident on exactly two vertices, there must be at least $(2\mu-n+1)\dfrac{n}{2}$ colour-isolated edges in $G$.
\end{proof}

\begin{theorem}
\label{large_red_matching}
For some $i\in [\mu]$, there exist at least $\dfrac{(2\mu - n)n}{2\mu}$ $i$-coloured isolated edges.
\end{theorem}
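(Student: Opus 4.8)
The plan is a short double-counting-plus-averaging argument sitting on top of \cref{deg_obs}, \cref{c_isolated_edges} and \cref{large_col_iso_edges}. For each colour $i\in[\mu]$ let $M_i$ denote the set of $i$-coloured isolated edges, i.e.\ the monochromatic edges $e=\{u,v\}$ with $c(e)=i$ and $d(u,i)=d(v,i)=1$. First I would record that each $M_i$ is in fact a \emph{matching}: if two edges of $M_i$ met at a vertex $u$, each would contribute a half of colour $i$ at $u$ and force $d(u,i)\ge 2$, contradicting colour-isolation. So it suffices to bound $\sum_{i\in[\mu]}|M_i|$ from below and then take the largest colour class.

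For the lower bound I would rerun the counting behind \cref{large_col_iso_edges} while keeping track of colours. By \cref{deg_obs}, every vertex $v$ satisfies $d(v,i)=1$ for at least $2\mu-n+1$ colours $i\in[\mu]$; for each such $i$, \cref{c_isolated_edges} applied at $v$ shows that the unique edge at $v$ carrying colour $i$ is colour-isolated and, being monochromatic of colour $i$, lies in $M_i$. These edges are pairwise distinct across the colours, so $v$ is incident to at least $2\mu-n+1$ edges of $\bigcup_{i\in[\mu]}M_i$ (all with colours in $[\mu]$ by construction). Summing over the $n$ vertices and halving gives $\sum_{i\in[\mu]}|M_i|\ge \frac{n(2\mu-n+1)}{2}$. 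Averaging over the $\mu$ colour classes then yields some $i\in[\mu]$ with $|M_i|\ge \frac{n(2\mu-n+1)}{2\mu}\ge \frac{n(2\mu-n)}{2\mu}$, and this $M_i$ is the desired matching of $i$-coloured isolated edges.

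I do not expect a genuine obstacle in this step: it is routine book-keeping that prepares the ground for the hard part, the structural analysis of the representative sparse graph spanned by $M$ in \cref{structural_analysis}, where $\mu\le n-|M|$ is established and, combined with $|M|\ge n-\frac{n^2}{2\mu}$, forces $\mu\le \frac{n}{\sqrt 2}$. The only point worth a second look is the matching claim above, which is immediate from the definition of a colour-isolated edge; and it is worth noting that the bound $\frac{n(2\mu-n)}{2\mu}$ stated in the theorem is a deliberately weakened form of the $\frac{n(2\mu-n+1)}{2\mu}$ that the averaging actually delivers, chosen because it equals $n-\frac{n^2}{2\mu}$ and therefore plugs cleanly into the final inequality.
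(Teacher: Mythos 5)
Your proposal is correct and follows essentially the same route as the paper: the paper simply invokes \cref{large_col_iso_edges} (which itself comes from \cref{deg_obs} and \cref{c_isolated_edges} exactly as you recount) and then averages over the $\mu$ colours, stating the slightly weakened bound $\frac{(2\mu-n)n}{2\mu}$ just as you observe. Your extra remark that each colour class of colour-isolated edges forms a matching is accurate and is used implicitly later in the paper, but it is not needed for this particular statement.
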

\begin{proof}
From \Cref{large_col_iso_edges}, we know that there are at least  $(2\mu-n+1)\dfrac{n}{2}$  colour isolated edges. Since there are $\mu$ colours, by a simple averaging argument, we get that for some colour $i\in [\mu]$, there exist at least $\dfrac{(2\mu - n)n}{2\mu}$ $i$-coloured isolated edges.
\end{proof}

\subsection{Detecting a sparse subgraph}
\label{structural_analysis}

\begin{figure}[t!]
    \centering
\centering    
\begin{subfigure}[b]{0.42\textwidth}
         \centering
         \fbox{\includegraphics[width=75mm]{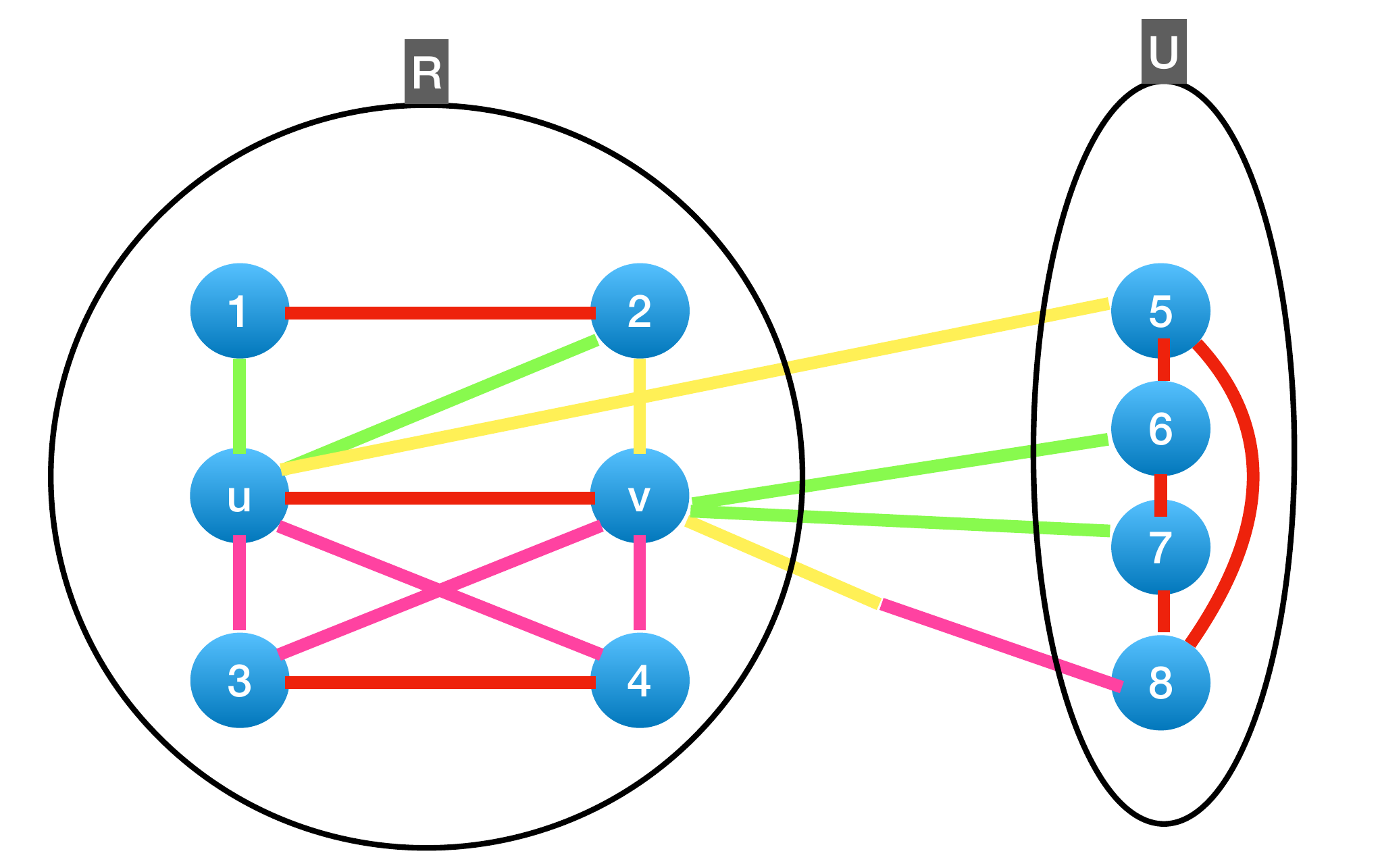}}
         \caption{Coloured graph $H$ where red is colour $1$}
\end{subfigure}
\hfill
\begin{subfigure}[b]{0.4\textwidth}
         \centering
         \fbox{\includegraphics[width=45mm]{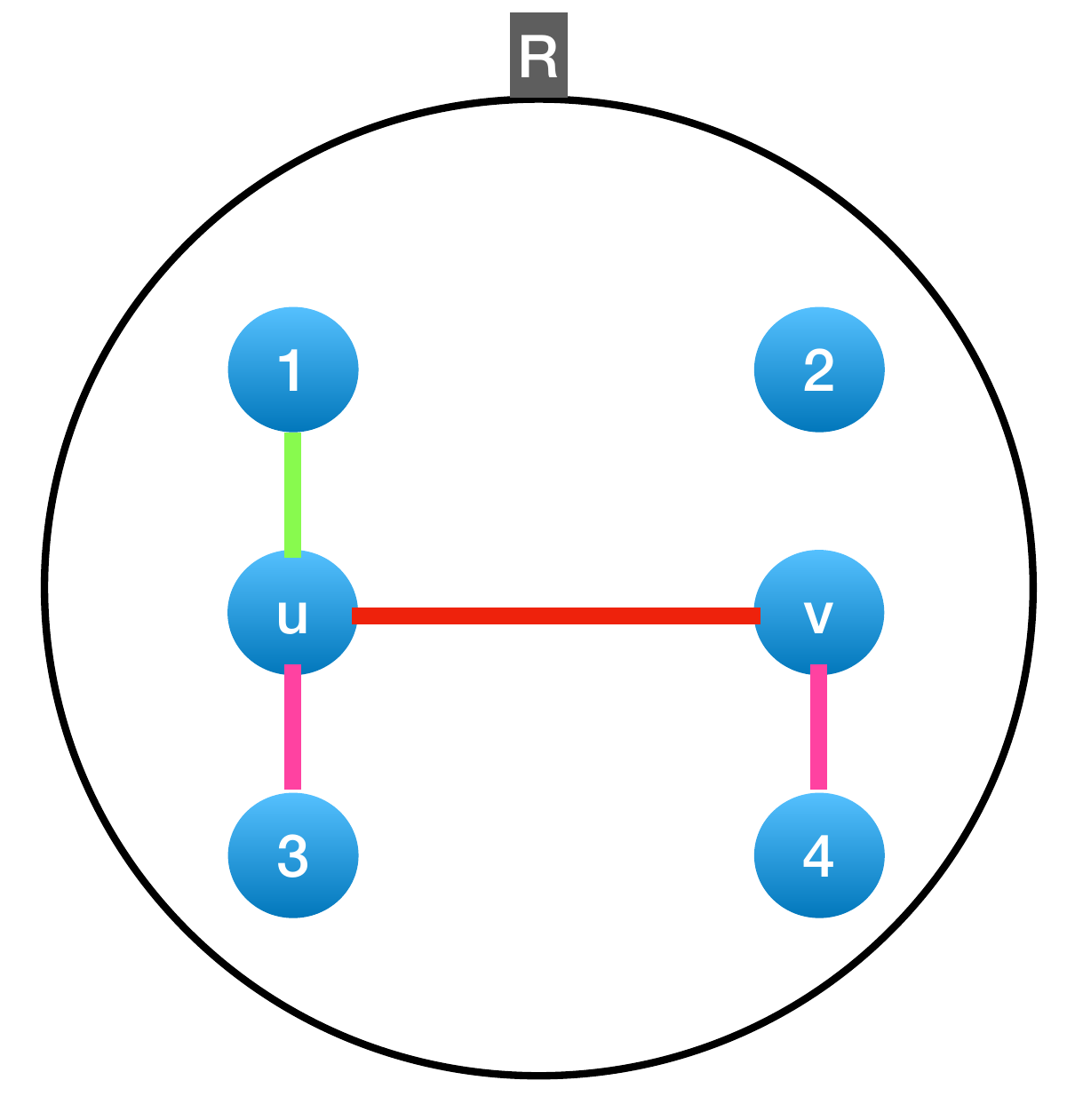}}
         \caption{Representative sparse graph $\chi(H)$} 
\end{subfigure}

\caption{Constructing a representative sparse graph:\\ $E(\chi_{u})=\{\{u,1\},\{u,3\},\{u,v\}\}$ and $E(\chi_{v})=\{\{v,4\},\{u,v\}\}$}
\label{fig:char_graph}
\end{figure}

Without loss of generality, we will assume that the large matching exhibited in \Cref{large_red_matching} has colour $1$ for the remainder of this section. Let $R$ represent the set of vertices whose colour degree with respect to the colour $1$ is one. Let $U$ be the set of vertices whose colour degree with respect to the colour $1$ is at least two. Clearly, $R \cup U= V(G)$.
\begin{observation}
$|R| \geq \dfrac{(2\mu-n)n}{\mu}$ and $|U| \leq \dfrac{n^2}{\mu}-n$.
\end{observation}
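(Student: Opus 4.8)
The plan is to read off both inequalities directly from \cref{large_red_matching}, once we observe that the colour-$1$ isolated edges are pairwise vertex-disjoint. First I would note that if $e=\{u,v\}$ is a colour-isolated edge of colour $1$, then by definition $d(u,1)=d(v,1)=1$, so $e$ is the \emph{unique} colour-$1$ edge incident on each of $u$ and $v$. Hence no two colour-$1$ isolated edges can share an endpoint: the set $M$ of all colour-$1$ isolated edges of $G$ is a matching, and every endpoint of an edge of $M$ lies in $R$ by the definition of $R$.

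Next I would invoke \cref{large_red_matching} (with the relevant colour renamed to $1$, as stipulated at the start of this subsection), which yields $|M|\geq \dfrac{(2\mu-n)n}{2\mu}$. Since $M$ is a matching, it saturates exactly $2|M|$ vertices, all of which belong to $R$; therefore $|R|\geq 2|M|\geq \dfrac{(2\mu-n)n}{\mu}$, which is the first claimed bound.

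For the second bound I would first argue that $R$ and $U$ partition $V(G)$. By \cref{obs:monoedge}, for every vertex $v$ there is a monochromatic edge of colour $1$ incident on $v$, so $d(v,1)\geq 1$ for all $v$; thus no vertex has colour degree $0$ in colour $1$, and every vertex lies either in $R$ (colour degree exactly one) or in $U$ (colour degree at least two). Consequently $|U|=n-|R|\leq n-\dfrac{(2\mu-n)n}{\mu}=\dfrac{n\mu-(2\mu-n)n}{\mu}=\dfrac{n^2}{\mu}-n$, as required.

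There is essentially no real obstacle in this step; the only two points that require a moment's care are (i) checking that distinct colour-isolated edges of the same colour are vertex-disjoint, so that a lower bound on their number becomes a lower bound on $|R|$, and (ii) using \cref{obs:monoedge} to exclude colour-degree-zero vertices so that $V(G)=R\cup U$ with $R\cap U=\varnothing$. Both are immediate from the definitions, and the remaining content is the one-line arithmetic above.
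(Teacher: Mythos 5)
Your proof is correct and follows essentially the same route as the paper: both arguments note that the colour-$1$ isolated edges form a matching covering vertices of $R$ (the paper states, via \cref{c_isolated_edges}, that they in fact perfectly match $G[R]$, i.e.\ there are exactly $|R|/2$ of them), combine this with \cref{large_red_matching} to bound $|R|$, and then use $|R|+|U|=n$ for the bound on $|U|$. Your extra check via \cref{obs:monoedge} that no vertex has colour degree zero is a slightly more explicit justification of the same partition fact the paper takes by definition.
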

\begin{proof}
From \Cref{c_isolated_edges}, every vertex $v \in R$ has exactly one $1$-coloured isolated edge incident on it, and each $1$-coloured isolated edge is incident on two vertices in $R$. Therefore, there are exactly $\dfrac{|R|}{2}$ $1$-coloured isolated edges forming a perfect matching in the induced subgraph $G[R]$. Along with \Cref{large_red_matching}, this would imply that $|R| \geq \dfrac{(2\mu-n)n}{\mu}$ . Since $|R|+|U|=n$, we infer that $|U| \leq \dfrac{n^2}{\mu}-n$. 
\end{proof}

Note that since $n \geq 6$, we have $|R| \geq \dfrac{(2\mu-n)n}{\mu} = (2-\dfrac{n}{\mu})n \geq (2-\sqrt{2})n \geq 3.5$. Since $|R|$ must be integral, $|R|\geq 4$. Therefore, there exists at least two $1$-coloured isolated edges.

We now pick an arbitrary $1$-coloured isolated edge $\{u,v\}$. For the remainder of this section, we base our analysis on the edges incident on the vertices $u,v$. We now construct a special subgraph on $R$ vertices (which is a subgraph of the induced subgraph $G[R]$) called the representative sparse graph $\chi$ using some of the edges incident on $u,v$.

We first construct the graph $\chi_u$ in the following way over $R$: For every colour $i\in [\mu]$, if there is no vertex $w \in U$ such that the $u$-half edge of the edge $\{u,w\}$ has colour $i$ (i.e., $c(\{u,w\},u)=i$), then add an arbitrary monochromatic edge of colour $i$ (its existence is guaranteed from \Cref{obs:monoedge}) to $E(\chi_u)$. Note that this contains the edge $\{u,v\}$ also. 

Similarly, we construct the graph $\chi_v$ in the following way over $R$: For every colour $i\in [\mu]$, if there is no vertex $w \in U$ such that the $v$-half edge of the  edge $\{v,w\}$ has colour $i$ (i.e., $c(\{v,w\},v)=i$), then add an arbitrary monochromatic edge of colour $i$ (its existence is guaranteed from \Cref{obs:monoedge}) to $E(\chi_v)$.

We define the graph $\chi$ over $R$ with the edge set $E(\chi)=E(\chi_u) \cup E(\chi_v)$. An {example} graph and its representative sparse graph are presented in \Cref{fig:char_graph}.

\begin{lemma}
\label{lem:sumoftwo}
$2\mu \leq 2|U|+|E(\chi)|+1 $    
\end{lemma}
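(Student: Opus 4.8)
The plan is to count, for each colour $i \in [\mu]$, a witness edge that certifies colour $i$ is "accounted for" either by a neighbour of $u$ or $v$ landing in $U$, or by an edge of $\chi$. Concretely, fix the $1$-coloured isolated edge $\{u,v\}$ used to build $\chi$. For a colour $i \in [\mu]$, say $i$ is \emph{captured at $u$} if there is a vertex $w \in U$ with $c(\{u,w\},u)=i$, and similarly \emph{captured at $v$}; otherwise, by the construction of $\chi_u$ (resp. $\chi_v$), colour $i$ contributes a monochromatic edge of colour $i$ to $E(\chi_u)$ (resp. $E(\chi_v)$). The first step is to make this dichotomy precise: every colour $i$ is either captured at $u$ or contributes an edge of colour $i$ to $\chi_u$; and independently, either captured at $v$ or contributes an edge of colour $i$ to $\chi_v$. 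Summing over the two vertices $u,v$ gives $2\mu$ "colour-slots" to cover.

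Next I would bound how many slots each resource can cover. The edges of $U$-neighbours of $u$: since $u$ has at most $|U|$ neighbours inside $U$, the number of colours captured at $u$ is at most $|U|$, and likewise at most $|U|$ colours are captured at $v$, for a total of at most $2|U|$ captured slots. For the remaining slots, each is witnessed by an edge of $E(\chi_u) \cup E(\chi_v) = E(\chi)$. Here one must be careful about double counting: the colour $1$ itself is special. Note that $\{u,v\}$ is $1$-coloured and isolated, so $d(u,1)=d(v,1)=1$, meaning neither $u$ nor $v$ has a $1$-coloured neighbour in $U$; hence colour $1$ is captured at neither $u$ nor $v$, and so colour $1$ forces an edge into both $E(\chi_u)$ and $E(\chi_v)$ — but that edge can be taken to be $\{u,v\}$ itself in both cases, so it is counted once in $E(\chi)$ while covering two slots. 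This is exactly where the additive $+1$ comes from. For every colour $i \neq 1$ that is not captured at $u$, the edge it contributes to $\chi_u$ is a distinct element of $E(\chi)$ (distinct colours give distinct monochromatic edges), and similarly for $v$; an edge of $E(\chi)$ of colour $i$ covers at most the $u$-slot and the $v$-slot for colour $i$, i.e. at most two slots, but a given colour's contribution to $E(\chi_u)$ and to $E(\chi_v)$ may be two different edges, in which case each covers one slot. Tallying: $2\mu \le 2|U| + (\text{slots covered by } E(\chi)) + 1 \le 2|U| + |E(\chi)| + 1$ once one checks that the slots covered by $\chi$-edges inject into $E(\chi)$ after handling colour $1$'s shared edge.

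The main obstacle is the bookkeeping around multiplicities: an edge of colour $i$ in $E(\chi)$ could in principle be used to cover the colour-$i$ slot at both $u$ and at $v$, so naively $|E(\chi)|$ slots could require only $|E(\chi)|/2$ edges, which would give a weaker bound; conversely the contributions from $\chi_u$ and $\chi_v$ for the same colour may be \emph{different} edges, so one cannot simply say each colour adds one edge. The careful claim to nail down is that the number of colour-slots not covered by $U$-neighbours is at most $|E(\chi)| + 1$: partition the uncaptured slots by colour, observe colour $1$ occupies two slots but only the single edge $\{u,v\}$, and for each other colour its (one or two) uncaptured slots are matched to its (one or two) distinct edges in $E(\chi)$, giving an injection from uncaptured slots minus one into $E(\chi)$. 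Combining with $2\mu \le 2|U| + (\text{number of uncaptured slots})$ yields the lemma. I expect the write-up to hinge entirely on stating this injection cleanly rather than on any substantive new idea.
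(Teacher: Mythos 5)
Your proposal is correct and follows essentially the same route as the paper, which simply adds the two inequalities $\mu \le |U|+|E(\chi_u)|$ and $\mu \le |U|+|E(\chi_v)|$ and uses $E(\chi_u)\cap E(\chi_v)=\{\{u,v\}\}$, so your slot-counting and injection are a rephrasing of that bookkeeping. The one point you assert without justification --- that for a colour $i\neq 1$ the edges contributed to $\chi_u$ and to $\chi_v$ cannot coincide --- is immediate from the construction: every edge of $\chi_u$ is incident on $u$ and every edge of $\chi_v$ is incident on $v$, so a shared edge would have to be $\{u,v\}$ itself, which has colour $1$ since the graph is simple.
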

\begin{proof}
For any colour $i \in [\mu]$, there is either a vertex $w \in U$ such that $c(e_u)=i$, where $e=\{u,w\}$ or there is a monochromatic edge of colour $i$ in $\chi_u$. Therefore, $$\mu \leq |U|+|E(\chi_u)|$$ Similarly, $$\mu \leq |U|+|E(\chi_v)|$$
From the above two equations, we get that 
$$2\mu \leq 2|U|+|E(\chi_u)|+|E(\chi_v)|$$

It is easy to see that $E(\chi_u) \cap E(\chi_v)=\{\{u,v\}\}$ and hence $|E(\chi)|=|E(\chi_u)|+|E(\chi_v)|-1$. It follows that 
$$2\mu \leq 2|U|+|E(\chi_u)|+|E(\chi_v)|=2|U|+|E(\chi)|+1$$
\end{proof}

Since every edge in $\chi_u$ is incident on $u$ and ends in $R\setminus \{u\}$, we can infer that $|E(\chi_u)| \leq |R|-1$. Similarly, since every edge in $\chi_v$ is incident on $v$ and ends in $R\setminus \{v\}$, we can infer that $|E(\chi_v)| \leq |R|-1$. As $E(\chi_u) \cap E(\chi_v)=\{\{u,v\}\}$, it follows that $|E(\chi)| \leq 2|R|-3$. But this upper bound for $|E(\chi)|$ can be strengthened by using the following structural observation. Informally, we prove that configurations like \Cref{fig:e0,fig:e1,fig:e2} are possible and configurations like \Cref{fig:e3,fig:e4} are not possible. 
\begin{lemma}
\label{struct_lemma}
If $\{u',v'\}$ is a $1$-coloured isolated edge in $G[R]$ distinct from $\{u,v\}$, then $$|\{\{u,u'\},\{u,v'\},\{v,u'\},\{v,v'\}\} \bigcap E(\chi)|\leq2$$    
\end{lemma}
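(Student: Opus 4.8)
The plan is to argue by contradiction. Suppose that at least three of $\{u,u'\},\{u,v'\},\{v,u'\},\{v,v'\}$ lie in $E(\chi)$. First note that $u,v,u',v'$ are four distinct vertices: $\{u,v\}$ and $\{u',v'\}$ are colour-isolated of colour $1$, so (via \cref{c_isolated_edges}) both belong to every monochromatic perfect matching of colour $1$ and are therefore vertex-disjoint. Since every edge of $\chi_u$ is incident to $u$ and every edge of $\chi_v$ to $v$, after relabelling through the symmetries $u\leftrightarrow v$ and $u'\leftrightarrow v'$ I may assume $\{u,u'\},\{u,v'\}\in E(\chi_u)$ and $\{v,u'\}\in E(\chi_v)$. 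By construction these three edges are monochromatic, with colours $a=c(\{u,u'\})$, $b=c(\{u,v'\})$ and $c=c(\{v,u'\})$. Because $\chi_u$ contains at most one edge of each colour, $a\neq b$; and because $u,v',v$ are distinct while $d(u,1)=d(v,1)=1$, none of $a,b,c$ equals $1$.

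Next I would study the vertex colouring $vc$ given by $vc(u)=vc(v')=b$, $vc(v)=vc(u')=c$, and $vc(x)=1$ otherwise; it is not monochromatic since $b\neq 1$. Taking a monochromatic perfect matching $P_1$ of colour $1$ (it exists by \cref{obs:monoedge} and contains $\{u,v\}$ and $\{u',v'\}$) and replacing $\{u,v\},\{u',v'\}$ by $\{u,v'\},\{v,u'\}$ produces a perfect matching inducing $vc$, so $vc$ is feasible. The heart of the argument is to show that \emph{every} perfect matching $P$ inducing $vc$ is pinned down except inside $U$. The edge of $P$ at $u$ has colour $b$ at $u$; since $\{u,v'\}\in E(\chi_u)$, every edge of colour $b$ at $u$ leads into $R$, say to $w$. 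If $w\notin\{u,v,u',v'\}$, then $\{u,w\}$ has colour $1$ at $w$, hence equals the unique colour-$1$ edge at $w$, which by \cref{c_isolated_edges} is monochromatic of colour $1$; but then $\{u,w\}$ has colour $1\neq b$ at $u$, impossible. The case $w=v$ is impossible because $\{u,v\}$ has colour $1$ at $u$, and $w=u'$ is impossible because the only $u$--$u'$ edge is $\{u,u'\}$, whose colour at $u$ is $a\neq b$ --- this last point is precisely where the third crossing edge is used. Hence $\{u,v'\}\in P$; the symmetric argument at $v$ (using $\{v,u'\}\in E(\chi_v)$ and that $v'$ is already matched to $u$) forces $\{v,u'\}\in P$. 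Every remaining vertex $w$ of $R$ has $vc(w)=1$ and so must use its unique colour-$1$ edge $e_w$ (\cref{c_isolated_edges}); thus $P$ restricted to $R$ is the fixed matching $\{\{u,v'\},\{v,u'\}\}\cup\{e_w:w\in R\setminus\{u,v,u',v'\}\}$, while $P$ restricted to $U$ is an arbitrary monochromatic colour-$1$ perfect matching of $G[U]$, and conversely every such choice on $U$ gives a perfect matching inducing $vc$.

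Finally I would read off $w(vc)$. The same description applied to colour $1$ itself shows the monochromatic colour-$1$ perfect matchings of $G$ are exactly $\{e_w:w\in R\}\cup M$, $M$ ranging over the monochromatic colour-$1$ perfect matchings of $G[U]$, so validity gives $\big(\prod_{w\in R}w(e_w)\big)\sum_M w(M)=1$, the product running over the $|R|/2$ colour-isolated edges of colour $1$ (which include $\{u,v\}$ and $\{u',v'\}$). Meanwhile $w(vc)=w(\{u,v'\})\,w(\{v,u'\})\big(\prod_{w\in R\setminus\{u,v,u',v'\}}w(e_w)\big)\sum_M w(M)$; dividing cancels $\sum_M w(M)$ together with the common $e_w$-factors and leaves $w(vc)=w(\{u,v'\})\,w(\{v,u'\})/\big(w(\{u,v\})\,w(\{u',v'\})\big)$, which is nonzero. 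This contradicts validity, since the non-monochromatic colouring $vc$ must have weight $0$. The delicate point is the forcing step: a priori the bi-chromatic edges at $u$ and $v$ together with the matchings inside $U$ could create many perfect matchings inducing $vc$, and the argument must confine all of that freedom to the colour-$1$ matching of $G[U]$, which is exactly the freedom normalised away by the weight-$1$ condition on the colour-$1$ monochromatic colouring.
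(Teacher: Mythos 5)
Your proof is correct and is essentially the paper's own argument up to the relabelling $u'\leftrightarrow v'$: it analyses the same non-monochromatic colouring, forces every perfect matching inducing it to use the fixed edges on $R$ with the remaining freedom confined to a monochromatic colour-$1$ matching of $G[U]$, and derives the same weight ratio $w(\{u,v'\})w(\{v,u'\})/\bigl(w(\{u,v\})w(\{u',v'\})\bigr)\neq 0$, contradicting validity. One small point to state explicitly: ruling out that $vc$ is monochromatic requires a vertex outside $\{u,v,u',v'\}$ (coloured $1$), since possibly $b=c$; this is exactly where the standing assumption $|V(G)|>4$ enters, as the paper notes.
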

\begin{proof}
Suppose $|\{\{u,u'\},\{u,v'\},\{v,u'\},\{v,v'\}\} \bigcap \chi| \geq 3$. Without loss of generality, let the edges $\{u,u'\},\{u,v'\},\{v,v'\}$ be present in $E(\chi)$ (as all $4$ edges are symmetrically positioned). Recall that by construction, all edges in $E(\chi)$ are monochromatic. So, let the edges $\{u,u'\},\{u,v'\},\{v,v'\}$ be of colours $i,j,k$, respectively. Since $E(\chi_u)$ and $E(\chi_v)$ contain at most one edge of any colour, we know that $i\neq j$ and $i,j,k$ are not equal to $1$. Note that $k$ might be equal to $i$ or $j$.

\begin{figure}[t!]
    \centering
\centering    
\begin{subfigure}[b]{0.18\textwidth}
         \centering
         \fbox{\includegraphics[width=0.8\textwidth]{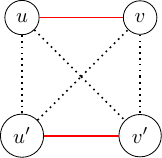}}
         \caption{Possible}
         \label{fig:e0}
\end{subfigure}
\hfill
\begin{subfigure}[b]{0.18\textwidth}
         \centering
         \fbox{\includegraphics[width=0.8\textwidth]{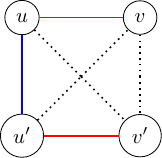}}
         \caption{Possible}
         \label{fig:e1}
\end{subfigure}
\hfill
\begin{subfigure}[b]{0.18\textwidth}
         \centering
         \fbox{\includegraphics[width=0.8\textwidth]{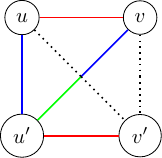}}
         \caption{Possible}
         \label{fig:e2}
\end{subfigure}
\hfill
\begin{subfigure}[b]{0.18\textwidth}
         \centering
         \fbox{\includegraphics[width=0.8\textwidth]{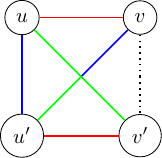}}
         \caption{Not possible}
         \label{fig:e3}
\end{subfigure}
\hfill
\begin{subfigure}[b]{0.18\textwidth}
         \centering
         \fbox{\includegraphics[width=0.8\textwidth]{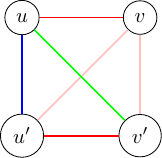}}
         \caption{Not possible}
         \label{fig:e4}
\end{subfigure}

\caption{Examples for \Cref{struct_lemma}. Dotted lines denote the edge being absent}
\label{fig:examples_size}
\end{figure}

Consider the vertex colouring $vc$ in which $u,u'$ get the colour $i$, $v,v'$ get the colour $k$, and the vertices in $V(G)\setminus \{u,u',v,v'\}$ are coloured $1$. Note that $V(G)\setminus \{u,u',v,v'\}$ is non-empty as $|V(G)|>4$. Let $vc'$ be the vertex colouring in which every vertex in $V(G)$ gets the colour $1$.
\begin{observation}
$\dfrac{w(vc)}{w(vc')}=\dfrac{w(\{u,u'\})w(\{v,v'\})}{w(\{u,v\})w(\{u',v'\})}$
\end{observation}
\begin{proof}
Let $M$ be the set of all $1$-coloured isolated edges. As every vertex in $R$ has exactly one $1$-coloured edge incident on it (and no $1$-coloured half-edges incident on it), every perfect matching $P'$ inducing $vc'$ must contain all edges in $M$. Let $W$ denote the weight of the monochromatic vertex colouring of colour $1$ on $G[U]$. As the vertices in $U$ match with themselves in every perfect matching $P'$ (since there is no $1$-coloured edge connecting $U$ to $R$), it is easy to see that
$$w(vc')= W\times \prod\limits_{e\in M} w(e)$$

As every vertex in $R\setminus\{u,v,u',v'\}$ has exactly one $1$-coloured edge incident on it, every perfect matching $P$ inducing $vc$ must contain all edges in $M\setminus \{\{u,v\},\{u',v'\}\}$. By definition of the representative sparse graph, the vertex $u$ can obtain the colour $i$ only through an edge $\{u,w\}$ such that $w \in R$. But the vertices $R\setminus \{u,v,u',v'\}$ are already matched and $c(\{u,v\}_u)=c(\{u,v\})=1\neq i$ and $c(\{u,v'\}_u)=c(\{u,v'\})=j\neq i$. Therefore, the edge $\{u,u'\}$ must be present in every perfect matching $P$ that induces $vc$. Again, by definition of the representative sparse graph, the vertex $v$ can obtain the colour $k$ only through an edge $\{v,w\}$ such that $w \in R$. But all the vertices in $R\setminus \{v,v'\}$ are already matched. Therefore, the edge $\{v,v'\}$ must be present in every perfect matching $P$. The remaining vertices in $U$ should match among themselves in every perfect matching $P$ that induces $vc$. It is now easy to see that the weight of the vertex colouring $vc$ is
$$w(vc)=
W \times w(\{u,u'\}) \times w(\{v,v'\}) \times \prod\limits_{e \in M\setminus\{\{u,v\},\{u',v'\}\}}w(e) $$

$$= \dfrac{w(\{u,u'\})w(\{v,v'\})}{w(\{u,v\})w(\{u',v'\})} \times W \times \prod\limits_{e \in M}w(e)$$
It follows that 
$$\dfrac{w(vc)}{w(vc')}=\dfrac{w(\{u,u'\})w(\{v,v'\})}{w(\{u,v\})w(\{u',v'\})}$$

\end{proof}

Since $vc'$ is a monochromatic vertex colouring, $w(vc')=1$. Recall that all edge weights are non-zero by definition. Therefore, $w(vc)$ is non-zero. But this is a contradiction as $vc$ is a non-monochromatic vertex colouring and should have weight $0$ by definition.

\end{proof}

\begin{lemma}
\label{lem:boundchi}
$|E(\chi)| \leq |R|-1$    
\end{lemma}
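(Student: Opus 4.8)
The plan is to exploit \cref{struct_lemma} through a counting argument: partition the edges of $\chi$ according to which edge of a natural colour-$1$ perfect matching of $G[R]$ their non-$\{u,v\}$ endpoint lies on, and then bound the size of each part by $2$.

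First I would pin down that matching structure. Every vertex of $R$ has colour degree exactly one with respect to colour $1$, so by \cref{c_isolated_edges} the unique colour-$1$ edge incident on such a vertex is colour isolated; in particular it is monochromatic and its other endpoint again has colour-$1$ degree one, hence also lies in $R$. Therefore the colour-$1$ edges spanned by $R$ form a perfect matching $M$ on $R$ with $|M| = |R|/2$, and $\{u,v\} \in M$. Write the remaining $k := |R|/2 - 1$ edges of $M$ as $\{u_1,v_1\},\dots,\{u_k,v_k\}$.

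Next I would use the construction of $\chi$: every edge of $\chi_u$ is incident on $u$ and every edge of $\chi_v$ is incident on $v$, so every edge of $\chi = \chi_u \cup \chi_v$ is incident on $u$ or on $v$. The edge $\{u,v\}$ itself lies in $\chi$ (it is the monochromatic colour-$1$ edge added to both $\chi_u$ and $\chi_v$), and since $G$ is simple it is the only edge of $\chi$ with both endpoints in $\{u,v\}$. Every other edge of $\chi$ thus has the form $\{u,w\}$ or $\{v,w\}$ with $w \in R \setminus \{u,v\}$, and such a $w$ lies on exactly one matching edge $\{u_j,v_j\}$. This yields a partition of $E(\chi) \setminus \{\{u,v\}\}$ into $k$ groups, the $j$-th group being a subset of $\{\{u,u_j\},\{u,v_j\},\{v,u_j\},\{v,v_j\}\}$.

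Finally I would apply \cref{struct_lemma} to each matching edge $\{u_j,v_j\}$, which is a colour-$1$ isolated edge of $G[R]$ distinct from $\{u,v\}$: it gives $|\{\{u,u_j\},\{u,v_j\},\{v,u_j\},\{v,v_j\}\} \cap E(\chi)| \le 2$, so the $j$-th group has size at most $2$. Summing over $j$ then gives $|E(\chi)| \le 1 + 2k = 1 + 2(|R|/2 - 1) = |R| - 1$. The argument is short; the only points needing care are the preliminary claim that the colour-$1$ edges inside $R$ genuinely form a perfect matching on $R$ (i.e.\ none escapes to $U$), which is exactly where \cref{c_isolated_edges} enters, and checking that the partition of $E(\chi)\setminus\{\{u,v\}\}$ is honest — no edge of $\chi$ has both endpoints outside $\{u,v\}$ — which follows from the incidence property of $\chi_u$ and $\chi_v$.
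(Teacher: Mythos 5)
Your proof is correct and is essentially the paper's own argument: the paper likewise notes that every edge of $\chi$ meets $\{u,v\}$ and, by \cref{struct_lemma}, at most two edges of $\chi$ run between $\{u,v\}$ and each of the other $\tfrac{|R|-2}{2}$ colour-$1$ isolated matching edges, giving $|E(\chi)|\leq 2\cdot\tfrac{|R|-2}{2}+1=|R|-1$. Your write-up merely makes explicit the supporting facts (the colour-$1$ edges form a perfect matching of $G[R]$ containing $\{u,v\}$, and the resulting partition of $E(\chi)\setminus\{\{u,v\}\}$) that the paper leaves implicit.
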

\begin{proof}
From \Cref{struct_lemma}, we know that $\chi$ can contain at most two edges between $\{u,v\}$ and any other $1$-coloured isolated edge. Therefore, the total number of edges in $\chi$ is $$E(\chi) \leq 2(\dfrac{|R|-2}{2})+1=|R|-1$$
\end{proof}

\begin{theorem}
$\mu \leq \dfrac{n}{\sqrt{2}}$    
\end{theorem}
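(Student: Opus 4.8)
The plan is to assemble the final bound directly from the three ingredients already established: \cref{large_red_matching} on the size of the large colour‑isolated matching, the observation bounding $|R|$ and $|U|$, and \cref{lem:sumoftwo} together with \cref{lem:boundchi}. First I would recall that, by \cref{lem:sumoftwo} and \cref{lem:boundchi},
\[
2\mu \leq 2|U| + |E(\chi)| + 1 \leq 2|U| + |R| - 1 + 1 = 2|U| + |R|.
\]
Since $|R| + |U| = n$, we have $2|U| + |R| = |U| + n = (n - |R|) + n = 2n - |R|$, so $2\mu \leq 2n - |R|$, i.e. $\mu \leq n - \tfrac{|R|}{2}$.

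Next I would plug in the lower bound on $|R|$ from the observation following \cref{large_red_matching}, namely $|R| \geq \dfrac{(2\mu - n)n}{\mu}$ (this is exactly the size of the vertex set spanned by the large matching $M$, with $|M| = |R|/2$). Substituting gives
\[
\mu \leq n - \frac{1}{2}\cdot\frac{(2\mu - n)n}{\mu} = n - \frac{(2\mu - n)n}{2\mu} = n\cdot\frac{2\mu - (2\mu - n)}{2\mu} = \frac{n^2}{2\mu}.
\]
Rearranging $\mu \leq \dfrac{n^2}{2\mu}$ yields $2\mu^2 \leq n^2$, hence $\mu \leq \dfrac{n}{\sqrt{2}}$, which is the desired conclusion and contradicts the assumption $\mu > \dfrac{n}{\sqrt{2}}$ made at the start of the proof of \cref{main_thm}.

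The argument is essentially a one‑line chain of substitutions once the structural lemmas are in hand, so there is no serious obstacle remaining at this stage; the only point requiring a little care is making sure the quantities are used consistently — in particular that $\chi$ lives on $R$ (so $|E(\chi)| \leq |R| - 1$ rather than $\leq 2|R| - 3$, which is where \cref{struct_lemma} does the real work), and that the edge‑minimality and simplicity assumptions invoked earlier are still the standing hypotheses. I would also note explicitly that the whole derivation is carried out under the colour‑$1$ relabelling fixed at the start of \cref{structural_analysis}, so no loss of generality is incurred. Finally, since proving \cref{main_thm} for edge‑minimum valid experiment graphs suffices for all valid experiment graphs, this completes the proof of \cref{main_thm}.
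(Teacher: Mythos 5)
Your proof is correct and follows essentially the same chain of substitutions as the paper: combine \cref{lem:sumoftwo} with \cref{lem:boundchi} to get $2\mu \leq 2|U|+|R| = 2n-|R|$, then apply the lower bound $|R| \geq \frac{(2\mu-n)n}{\mu}$ to obtain $2\mu \leq \frac{n^2}{\mu}$ and hence $\mu \leq \frac{n}{\sqrt{2}}$. No gaps; the surrounding remarks about standing hypotheses are consistent with the paper's setup.
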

\begin{proof}
From \Cref{lem:sumoftwo}, we have 
$$2\mu \leq   2|U|+|E(\chi)|+1 $$
Using \Cref{lem:boundchi}, we get
$$2\mu \leq   2|U|+|R| $$
As $|U|+|R|=n$ by definition,
$$2\mu \leq   2n-|R| $$
It follows from \Cref{large_red_matching} that $|R|$ is at least $\dfrac{(2\mu-n)n}{\mu}$ and hence
$$2\mu \leq 2n-\dfrac{(2\mu-n)n}{\mu} = \dfrac{n^2}{\mu}$$
Therefore, $\mu \leq \dfrac{n}{\sqrt{2}}$.
\end{proof}

\section{Conclusion}
Whether or not GHZ states of higher dimensions can be created in quantum photonic experiments is an important open question with little progress despite several efforts. Making use of the graph-theoretic formulation of this question, we present the first non-trivial upper bounds on the dimension of GHZ states which can be created. In particular, we prove that $n$ particle $d$-dimensional states can not be created with large dimension (i.e., with $d>\dfrac{n}{\sqrt{2}}$) using a large class of experiments. 

Our proof uses the framework of finding contradicting properties of the minimal counter-example. This framework is quite general (and commonly used in graph theory) and is extendable to any quantum state. However, the challenging part would be to understand which contradicting properties one should look for in the other quantum states. Some properties, like the minimal counterexample being a matching covered graph, are quite general and extend to the experiment graphs corresponding to any quantum state! On top of this, one may design other pruning strategies to suit the quantum state of interest and then find contradicting properties. The idea of colour-isolated edges could be helpful in finding such contradicting properties.

For instance, consider a W state $|W_n\rangle = \dfrac{1}{\sqrt{n}}  \hat{S}\left(|0\rangle^{\otimes(n-1)}|1\rangle\right)$ where $n$ is the number of particles and $\hat{S}$ is the symmetrical operator that gives summation over all distinct permutations of the $n$ particles. Using our pruning ideas, one can get useful properties for a minimal experiment graph, which can generate $|W_n\rangle$. For instance, every vertex $v$ would have exactly one half-edge of colour $1$ incident on it, and the remaining half-edges must be of colour $0$. This is consistent with the construction of W states described in \cite{Quantum_graphs_3}. Our techniques are not immediately extendable for systems beyond post-selected states. However, it is known that graph-based representation can lead to many more systems beyond post-selected states (like NOON states and heralded states) \cite{AIquantum2}. Extending our techniques for such states is an important open question we leave for future work.

\bibliographystyle{unsrtnat}
\bibliography{mybibliography.bib}

\end{document}